\newcommand*{\addFileDependency}[1]{
\typeout{(#1)}
%
%
\@addtofilelist{#1}
%
\IfFileExists{#1}{}{\typeout{No file #1.}}
}\makeatother
\newcommand*{\myexternaldocument}[1]{%
\externaldocument{#1}%
\addFileDependency{#1.tex}%
\addFileDependency{#1.aux}%
}
\newcites{supp}{Supplementary References}
\theoremstyle{plain}
\theoremstyle{remark}
\newtheorem{proposition}{Proposition}
\begin{document}

\begin{frontmatter}
\title{Communication network dynamics in a large organizational hierarchy}
\runtitle{Communication network dynamics}

\begin{aug}
\author[A]{\fnms{Nathaniel} \snm{Josephs}},
\author[B]{\fnms{Sida} \snm{Peng}},
\and
\author[C,D,E,F,G]{\fnms{Forrest W.} \snm{Crawford}}

\runauthor{Josephs et al.}

\address[A]{Department of Statistics, North Carolina State University}
\address[B]{Office of Chief Economist, Microsoft Research}
\address[C]{Department of Biostatistics, Yale School of Public Health}
\address[D]{Department of Statistics \& Data Science, Yale University}
\address[E]{Department of Ecology \& Evolutionary Biology, Yale University}
\address[F]{Yale School of Management}
\address[G]{RAND Corporation}
\end{aug}

\begin{abstract}
Most businesses impose a supervisory hierarchy on employees to facilitate management, decision-making, and collaboration, yet routine inter-employee communication patterns within workplaces tend to emerge more naturally as a consequence of both supervisory relationships and the needs of the organization. What then is the relationship between a formal organizational structure and the emergent communications between its employees? Understanding the nature of this relationship is critical for the successful management of an organization. While scholars of organizational management have proposed theories relating organizational trees to communication dynamics, and separately, network scientists have studied the topological structure of communication patterns in different types of organizations, existing empirical analyses are both lacking in representativeness and limited in size. In fact, much of the methodology used to study the relationship between organizational hierarchy and communication patterns (and much of what is known about this relationship) comes from analyses of the Enron email corpus, reflecting a uniquely dysfunctional corporate environment. In this paper, we develop new methodology for assessing the relationship between organizational hierarchy and communication dynamics and apply it to Microsoft Corporation, currently the highest valued company in the world, consisting of approximately 200,000 employees divided into 88 teams, organizational trees rooted at the senior leadership level. This reveals distinct communication network structures within and between teams. We then characterize the relationship of routine employee communication patterns to these team supervisory hierarchies, while empirically evaluating several theories of organizational management and performance. To do so, we propose new measures of communication reciprocity and new shortest-path distances for trees to track the frequency of messages passed up, down, and across the organizational hierarchy. By describing how communication clusters around the formal organization, we reveal the emergent communication dynamics between employees and the crucial role of position in the hierarchy.
\end{abstract}

\begin{keyword}
\kwd{Communication Dynamics}
\kwd{Email Network}
\kwd{Organizational Hierarchy}
\kwd{Reciprocity}
\kwd{Reporting Distance}
\kwd{Path Analysis}
\kwd{Latent Tree}
\end{keyword}

\end{frontmatter}


\section{Introduction}

Most organizations impose a supervisory hierarchy on their employees to facilitate management, decision-making, and collaboration \citep{robbins2004organizational,galbraith2008organization}.
In contrast, routine interpersonal communication patterns within workplaces tend to emerge more naturally, as a consequence of both supervisory relationships and the needs of the organization.
``If the formal organization is the skeleton of a company," \citet{krackhardt1993informal} analogize, ``the informal is the central nervous system driving the collective thought processes, actions, and relations of its business units.''
Understanding the relationship between these two networks is crucial to successful management of an organization \citep{cross2007role}.

Scholars of organizational behavior and management have proposed theories relating social network features to measures of business performance \citep{tichy1979social,krackhardt1988informal} and researchers have described strategies to measure and improve communication networks within an organization \citep{cross2007role,nielsen2016work}.
Some of this work makes explicit claims about how communication network features relate to organizational effectiveness.
For example, \citet{krackhardt1988informal} conjecture that effective organizations have a higher rate of between-team than within-team communication.
Alternatively, \citet{hatch1997relations} and \citet{holtzhausen2002effects} argue that communication is better in decentralized organizations in which ``authority is dispersed downward in the hierarchy."
Several authors claim that the highest-performing employees have more communication network connections \citep{cross2002making, nielsen2016work}.

To evaluate these theories, management researchers, computer scientists, and network scientists have proposed frameworks and methodologies for studying latent, or unmeasured, hierarchical structure within interaction or communication networks \citep{capobianco1979strength,palus2011evaluation,fire2016organization}.
One objective is to use communication network features to classify individuals by rank or job title in the corresponding organizational hierarchy, thereby uncovering network correlates of institutional position. 
A common approach is computing network measures of centrality for employees, then ranking them in order of importance.
For example, \citet{shetty2004enron} use entropy to measure importance of employees, and \citet{namata2006inferring} use variations in email counts to rank employees.
\citet{rowe2007automated} rank individuals based on a combination of communication flow statistics such as average response time and topological statistics such as the number of communication network cliques to which each employee belongs.
\citet{hossain2009effect} and \citet{michalski2011matching} use classical centrality measures to rank employees, and \citet{wang2013analyzing} adapt the PageRank algorithm \citep{page1999pagerank} for email networks, which they call HumanRank.
\citet{zhang2009analyzing} assess supervised clustering techniques for predicting missing job titles based on the number of emails sent and received by each employee.
\citet{dong2015inferring} find that ``structural holes'' in the communication network are useful for identifying ``high-status" individuals, and that these individuals form a ``rich club" community of employees that is connected, balanced, and dense.
\citet{nurek2020combining} show that auxiliary employee information, such as number of days of overtime worked, can improve classification of job title.

Beyond classifying employees by rank or job title, there are three methods in the literature that attempt to infer a tree-like structure (ideally the entire organizational hierarchy) using inter-employee communication data. \citet{creamer2007segmentation} segment employees into hierarchical levels, then link them by an edge when two employees exchanged at least $m$ emails, though this does not guarantee the inferred organization is a tree.
\citet{maiya2009inferring} use a distance-based approach that assumes interactions are most common between supervisors and their direct reports, as well as between organizational peers.
\citet{gupte2011finding} define ``agony" as the difference in rank plus one for communications that are directed from a lower-ranked to a higher-ranked employee, and infer the organizational hierarchy that minimizes agony.
While there are many hierarchical clustering methods that infer a hierarchy given a tree input, such as the hierarchical random graph model from \citet{clauset2008hierarchical}, their objective is different from the aforementioned methods.
Hierarchical clustering methods do not produce a spanning tree, i.e. a tree with a vertex set equal to that of the input graph.
The vertices of the output hierarchy are not the same as the vertices of the input network, hence hierarchical clustering is a different task than tree reconstruction.

Much of the methodology used to study the relationship between organizational hierarchy and communication patterns (and much of what is known about this relationship) comes from analyses of the Enron email corpus \citep{klimt2004enron}, which consists of email communications among a group of senior employees of the firm in advance of its collapse in 2001.
Appendix Table \ref{tab:enron} shows a summary of these methods.
Despite the repeated analysis of this data, the full organizational hierarchy of Enron employees is not publicly available: the Enron email corpus only consists of 158 employees that mostly belong to upper management.
More insidiously, insights gained from analyses of the Enron data may reflect features of this uniquely dysfunctional corporate environment: observed associations between the email network and organizational hierarchy in this failing business may not be representative of more successful organizations.
Indeed, \citet{eckhaus2018managerial} use sentiment analysis to show that a semantic measure of ``hubris'' in email content increases closer to the Enron collapse.
Similarly, \citet{diesner2005exploration} argue that during the crisis, communication transcended the formal positions more than in previous months.
Consequently, it is essential to move beyond the study of the Enron corpus in order to understand how an organization's structure relates to its communication dynamics.

Toward this end, researchers have conducted several empirical analyses of other organizations, but each is limited in scope.
\citet{holtzhausen2002effects} study a South African financial services organization to assess the relationship between decentralization and communication.
However, both communication and decentralization were only measured via survey response rather than via communication logs and the actual organizational hierarchy.
\citet{guimera2006real} characterize a university email network and relate it to the community structure within the institution.
Similarly, \citet{boeva2017analysis} study a telecommunications company to evaluate metrics for comparing the formal organization to the informal communication structure, and \citet{sims2014hierarchical} study communication patterns within Los Alamos National Laboratory
through email records.
\citet{edge2020workgroup} and \citet{athreya2022discovering} study email communication within the Microsoft organization; the former computes and visualizes network embeddings for work groups over time, and the latter uses similar network embeddings to detect COVID-19 as an exogenous shock to the organization.
Few studies provide a holistic view of the relationship between organizational structure and communication dynamics. In several cases the organizational hierarchy is coarsely measured and communication dynamics are only considered between groups consisting of many individuals (colleges and administrative blocks in \citet{guimera2006real}, divisions in \citet{boeva2017analysis}, and programs/projects in \citet{sims2014hierarchical}) rather than between individual employees.  Sometimes the organizational structure itself is inferred from the communication data (work groups are defined as groups of individuals with high within-team communication by \citet{edge2020workgroup} and \citet{athreya2022discovering}) implicitly assuming a positive relationship between hierarchy and communication and thus effectively using the same data as both input and response.

How do emergent communication dynamics relate to organizational hierarchy in a large and successful organization? 
Are communication network centrality measures a good proxy for organizational position?
What is the relationship between organizational rank and frequency of communication?
Empirically testing organizational theories, especially those postulating relationships between network centrality measures and organizational position, would require data from a large organization's individual-level organizational hierarchy with finely observed communication patterns between all employees.
Availability of replicates -- data from subgroups or teams within the organization -- would permit testing of organizational theories across business task areas. 

In this paper, we study the relationship between the formal organizational hierarchy and emergent email communication dynamics among 241,718 employees at Microsoft Corporation in May 2019. The data set is larger than any other organization described in the literature and consists of each employee's location in the company's supervisory/reporting hierarchy, as well as monthly email communication counts between all pairs of employees.
We characterize the topological distance and direction of email messages passed up, down, and across the organizational hierarchy.
By dividing the organization into 88 teams -- organizational trees rooted at the senior leadership level -- we identify distinct communication network structures within and between teams.
We also evaluate established methods for latent structure estimation and employee rank classification using the email communication network, and assess their correspondence to the true organizational tree across the whole organization and within individual teams.
We conclude with a discussion of the relationship between routine employee communication patterns and the supervisory hierarchy, and empirically evaluate several theories of organizational management and performance.


\section{Data}

Data used in this analysis consist of email communications between employees and the organizational hierarchy of all employees in Microsoft Corporation.
Email communications consist of monthly counts of emails between pairs of employees distinguished by sender and recipient.
A total of 95.5 million emails between employees are represented in these counts.
The organizational data are represented by directional reporting relationships between employees.
Unique anonymous employee identifiers link employees across both data sets.
A total of 241,718 unique employee identifiers are present across the data sets, with 89\% present in both data sets. 
We construct two networks by selecting records belonging to employees present in both data sets.
In the directed email communications network, nodes correspond to employees and edges are weighted by the total number emails over one month.
The weighted adjacency matrix $A$ consists of elements $A_{uv} = \#~\text{emails sent from employee}~u~\text{to}~v$.
The organizational tree is constructed using directional reporting relationships, where the Chief Executive Officer is the root.
Nodes in the organizational tree represent employees and edges represent (unweighted) reporting relationships. 

To accommodate the existence of divisions with distinct business roles within the company, we divide the organization into sub-organizations consisting of sub-trees of the organizational hierarchy, which we call \textit{teams}.
By defining teams, we are able to study patterns of within- and between-team communication.
Moreover, teams provide a notion of replicates that allow us to compute averages and examine trends between the organizational hierarchy and the communication network.
To construct teams, we first identify a set of \textit{team leaders}, defined as employees two steps below the CEO in the organizational chart.
A team's organizational tree is defined as the rooted tree with the team leader as the root, which is a a subtree of the entire organizational tree.
Its corresponding email network is the email communications network induced by all of the organizational descendants of the team leader; the team email network is a subgraph of the entire email network.
Because not all employees are present in both the email network and organizational tree, we take the intersection of the nodes in a team's organizational tree and email network and redefine these networks as the subgraphs induced by this intersection.
We exclude teams with fewer than 100 members from this analysis. 

As a final data cleaning step, we remove employees without any record of communication if they are terminal nodes in the hierarchy, thus preserving the tree structure.
This ensures that our distance measures between employees are meaningful and well-defined.
Since, by definition, terminal nodes in the hierarchy have no out-going edges, we can freely discard them without breaking the tree structure of the organization.
As these nodes have no communication sent or received, i.e. their total degree in the communication network is zero, they add nothing to the analysis. 
On the other hand, we keep nodes with no communication that are not terminal nodes, which implies they have at least one reporting relationship in the organizational hierarchy.
To ensure the communication network is connected, we add a single directed edge from the highest out-degree node in their team.
This addition is not critical to the overall analysis, but a connected network is necessary to compute some of the network summary statistics and centrality measures.
We chose the highest out-degree node in their team based on the idea that this person was the most likely person to send emails, and subsequently least likely to change the email graph topology.
In total, we add only 788 edges, which is less than $0.004\%$ of all of the edges.

Our final email network is the union of all of the teams' email networks plus all of the original edges between the different teams.
By construction, each team is connected in the organizational hierarchy, and the overall email network is also connected.
It consists of 197k nodes and 24.6 million edges.

For every node, we record several employee features.
First, we record each employee's \textit{supervisor}, which is the employee that the node reports to in the organizational tree.
We say that a group of teams is in the same \textit{division} when their respective team leaders share a supervisor, who by construction reports directly to the CEO.
Finally, we record each node's \textit{level}, which is the number of steps below the CEO in the organizational hierarchy.
In total, the data set consists of a total of 88 teams within 12 divisions.
The sizes of the teams range from 99 to 19,458 (average 2,237, standard deviation 3,612) and the \textit{depth} of the teams, which is the maximum level of individuals in the team, ranges from 5 to 13 (average 8.5, SD 1.7).

Figure~\ref{fig:ms_vis} shows a visualization of the organizational tree and email network using the Gephi software package \citep{bastian2009gephi}.
Teams are colored with the same color scheme in both networks, and colors are assigned to teams on a spectrum so that teams within the same division have similar colors.
On the left, black edges denote the top of the organizational tree, from the CEO to the division leaders to the team leaders.
Within teams, nodes are organized first by level and then by direct report.
On the right, edges are directed and colored according to the source of emails, i.e. the email sender.
For visual clarity, we only include edges where emails were reciprocated and at least 30 total were exchanged over the course of the month \citep{guimera2006real}.
We then visualize the giant connected component, which consists of 94\% of the original nodes (185,276) but just 3\% of the edges (762,936), thus making the visualization manageable to compute and clearer to render.

\begin{figure} 
    \centering
    \includegraphics[width=.495\textwidth]{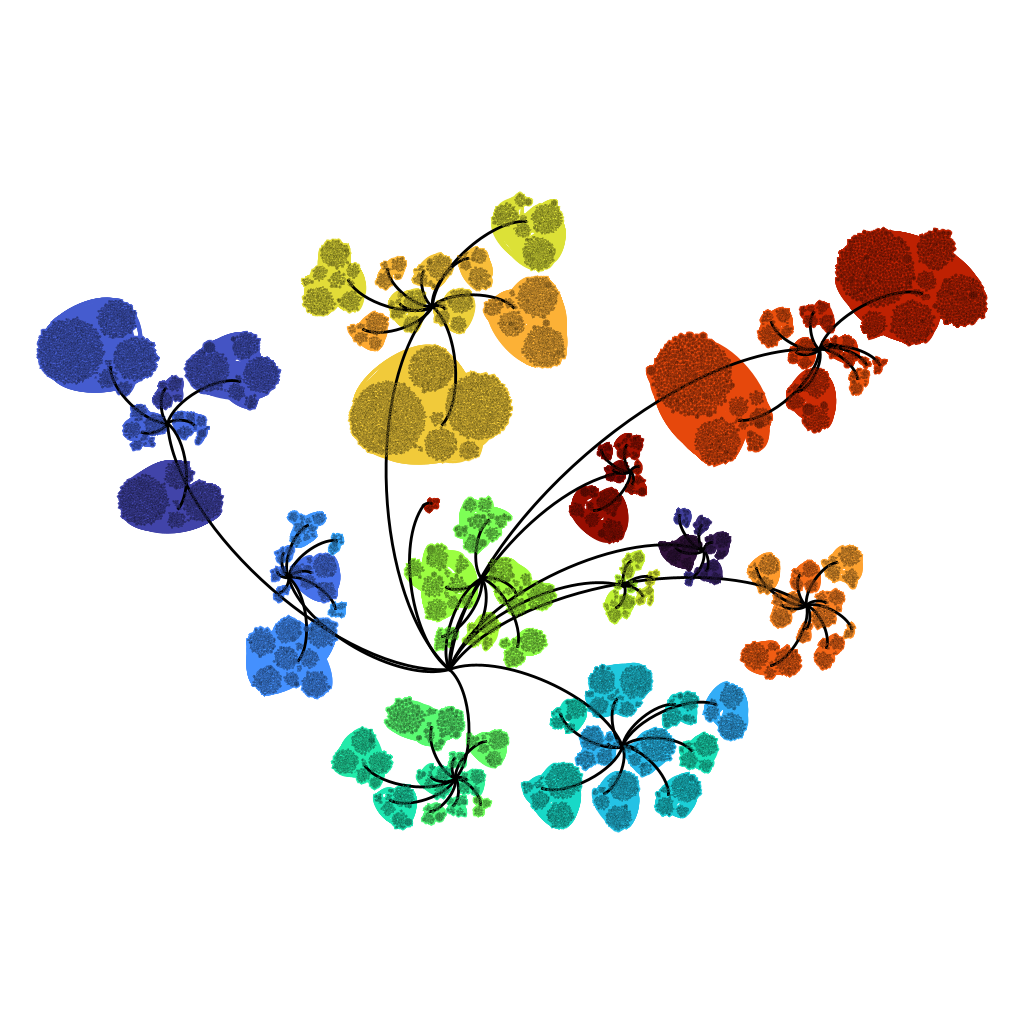}
    \includegraphics[width=.495\textwidth]{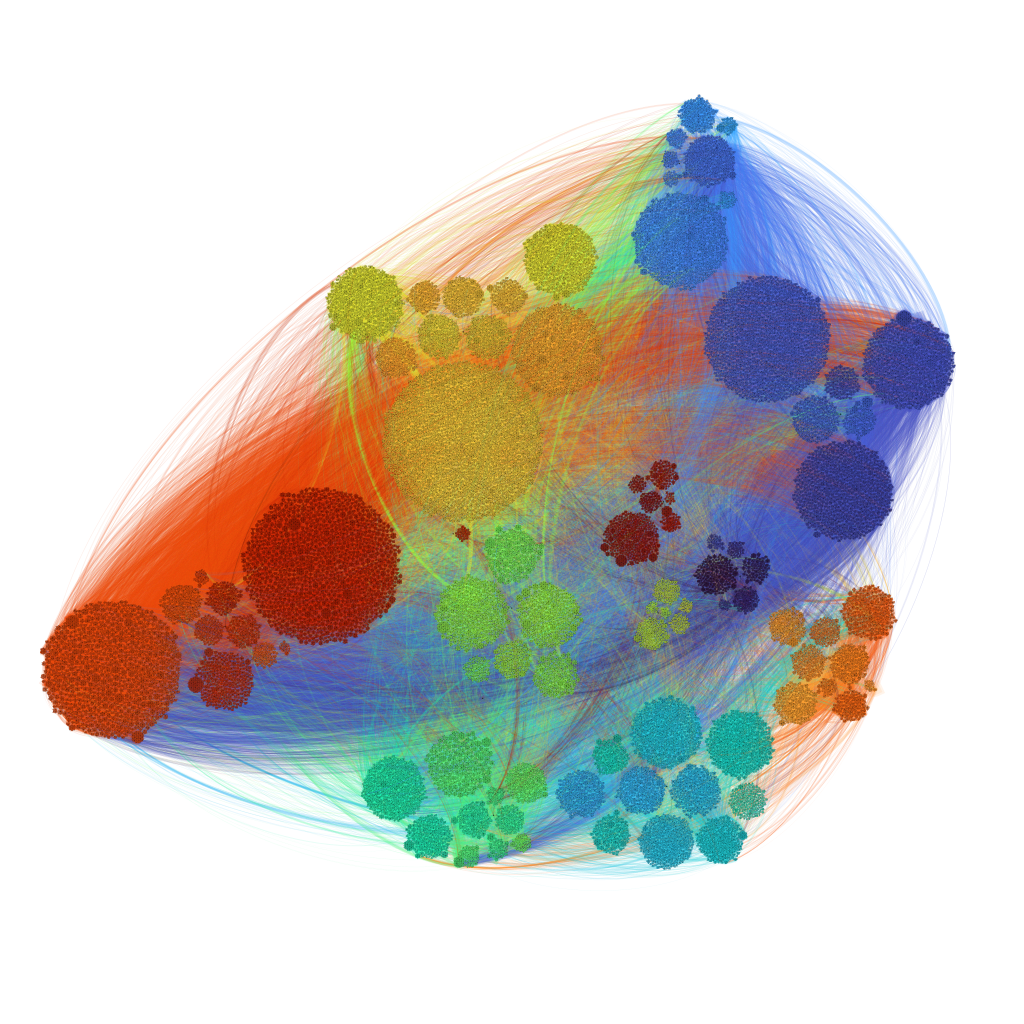}
    \caption{(Left) The organizational tree consisting of 196,832 employees in May 2019. The nodes are colored by team on a gradient such that teams in the same division are closer in color. The CEO is the black dot from which all of the black edges emanate. (Right) The email network consisting of all employees in May 2019 and the 24.6 million edges representing 84.7 million emails using the same color scheme for teams. Edges are directed and colored by source team. For visualization purposes, the email network has been coarsened as described in the text.}
    \label{fig:ms_vis}
\end{figure}


\section{EDA}\label{sec:eda}

\subsection{Email communication network} \label{sec:eda_ind}

We begin by studying the degree distributions of the email network.
The \textit{total degree} is the number of email relationships with nonzero weight, which we separate into \textit{in} (received) and \textit{out} (sent) degrees.
The average total degree is 250 (SD 402), and the median is 166 indicating a heavy right skew in the degree distribution.
The 95th percentiles of in and out degrees are 349 and 451, respectively.
By symmetry, since both sender and recipient are in the data set, the average in and out degrees are equal to half of the average total degree.
Figure \ref{fig:email_eda} shows a rapid decay in degree frequency that is consistent with patterns observed in other communication networks \citep{onnela2007structure}.
The long tail indicates that there are a few employees who receive emails from thousands of employees, and even fewer ``super-senders" who send emails to tens of thousands of employees.
This highly skewed out-degree distribution has been been identified in prior work on communication patterns in organizations, and may reflect activity related to email lists or listservs \citep{guimera2006real, onnela2007structure}.

Based on a Kolmogorov-Smirnov test, the total degrees follow a power-law distribution $(\propto x^{-\alpha})$ with an estimated power law exponent of $\hat{\alpha} = 3.8$ (SD 0.1) \citep{clauset2009power, JSSv064i02}.
However, we also include the recent scale-free analysis proposed in \citet{broido2019scale} at the team-level in Appendix \ref{sec:scale_free}, which shows that 69 of the 88 teams are not scale-free.
In and out degrees also show a strong positive Pearson correlation ($\rho = 0.66$). The same trend holds ($\rho = 0.71$) for \textit{total strength}, which is the sum of the edge weights and represents the total volume of email.
\textit{In} and \textit{out strength} are the total number of emails received and sent, respectively.
The average total strength is 861 (SD 1204), the median total strength is 563, and the 95th percentiles of in and out strength are 1,272 and 1,461, respectively.

The first question about the emergent communication dynamics is to what extent communication clusters around team.
One summary metric of this relationship is modularity, which measures the cluster strength of a network given a vector of cluster assignments for the node set \citep{newman2006modularity}.
Using teams and divisions, the email network has a weighted modularity of 0.64 and 0.61, respectively, similar to modularity measures reported in other communication networks in which teams are defined implicitly by maximizing modularity \citep{edge2020workgroup}.
To visualize this clustering, we show the team adjacency matrix $P$ in Figure~\ref{fig:email_density}.
The $(i, j)$ element is the proportion of pairs of employees from teams $i$ and $j$ that have ever exchanged an email: 
\begin{equation*}
    P_{ij} = \begin{cases}
        \Big(n_in_j\Big)^{-1} \sum\limits_{\substack{u \in \text{team}~i\\ v \in \text{team}~j}} 1(A_{uv} + A_{vu} > 0) & \quad \text{if}~i \neq j \\[1em]
        \displaystyle\binom{n_i}{2}^{-1} \sum\limits_{\substack{u,v\in \text{team}~i\\ u \neq v}} 1(A_{uv} + A_{vu} > 0) & \quad \text{if}~i = j
    \end{cases} \enskip ,
\end{equation*}
where $n_i$ is the number of employees in team $i$ and $1(\cdot)$ is the indicator function.
The rows and columns of the block matrix are sorted so that teams are clustered by their divisions.

Figure \ref{fig:email_density} shows that more emails are exchanged within teams than between teams, and the block diagonal structure also shows that communication is more common within divisions.
The average diagonal (within team) is 0.091, which is the average proportion of pairs in the same team that exchanged at least one email in May 2019, and the average off-diagonal (between team) is 0.002.
This indicates that teams are dense but communication between teams is sparse.
The last row/column represents the CEO and their direct reports, which is evidence of the ``rich club" phenomenon -- the tendency for central nodes to be embedded in dense communities -- at the top of the organizational chart \citep{colizza2006detecting, dong2015inferring}.

Figure~\ref{fig:email_eda} shows communication frequency within different groups of employees: all employees, those at the same level of the organizational hierarchy, those within the same division, those within the same team, and those who share a supervisor. 
Communication patterns largely cluster around the organizational structure: within teams, within divisions, within the same organizational level, and among those with the same supervisor. 

\begin{SCfigure} 
    \centering
    \includegraphics[width=0.6\textwidth]{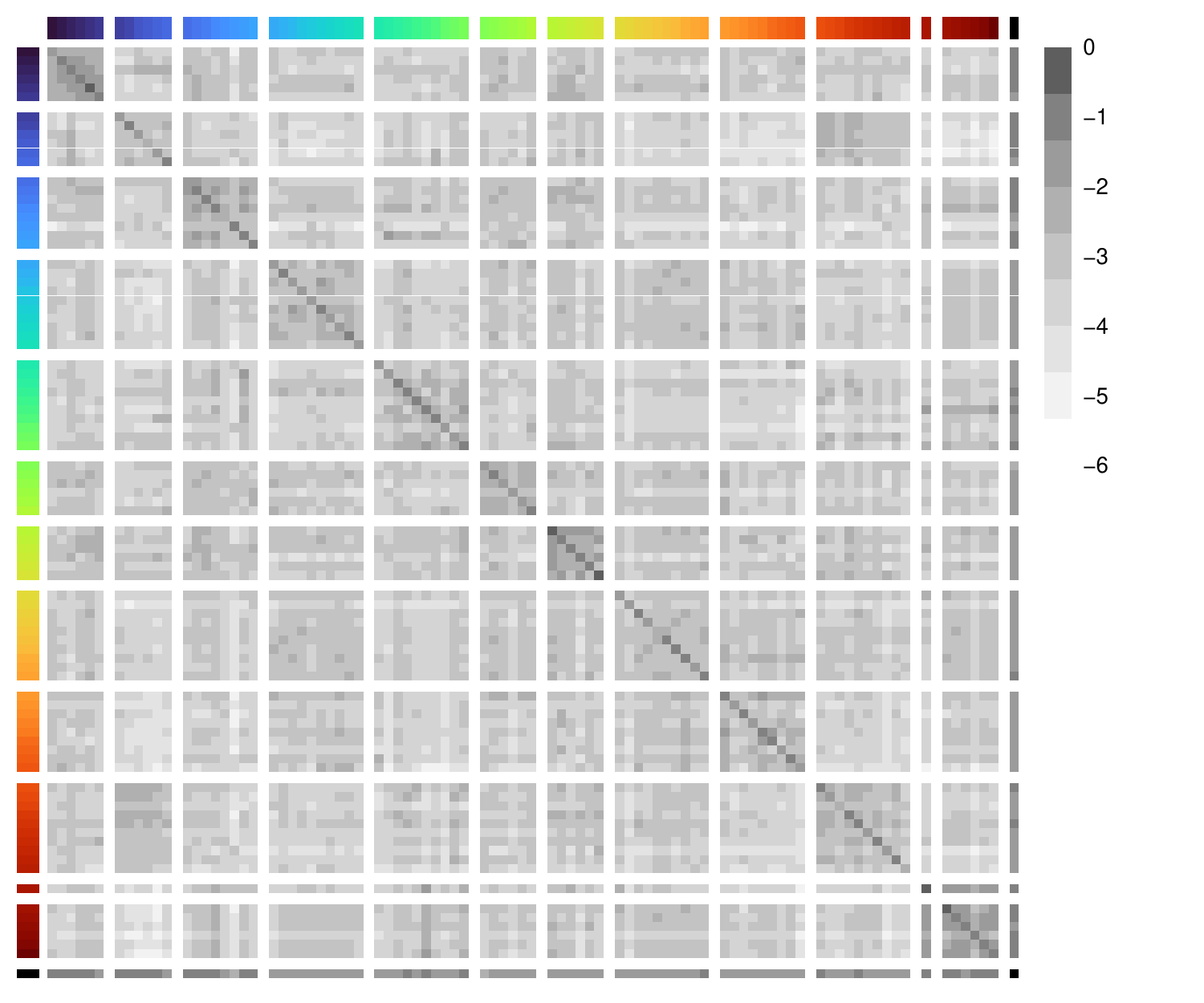}
    \caption{Email frequency (log10 scale) within and between teams. Darker colors indicate more frequent communication. Teams are indicated using the same colors as in Figure~\ref{fig:ms_vis} and are blocked according to adjacency in the organizational hierarchy. The diagonal represents within-team communication. The last row/column, colored in black, represents the CEO and their direct reports.}
    \label{fig:email_density}
\end{SCfigure}

\begin{figure} 
    \centering
    \includegraphics[width=.325\textwidth]{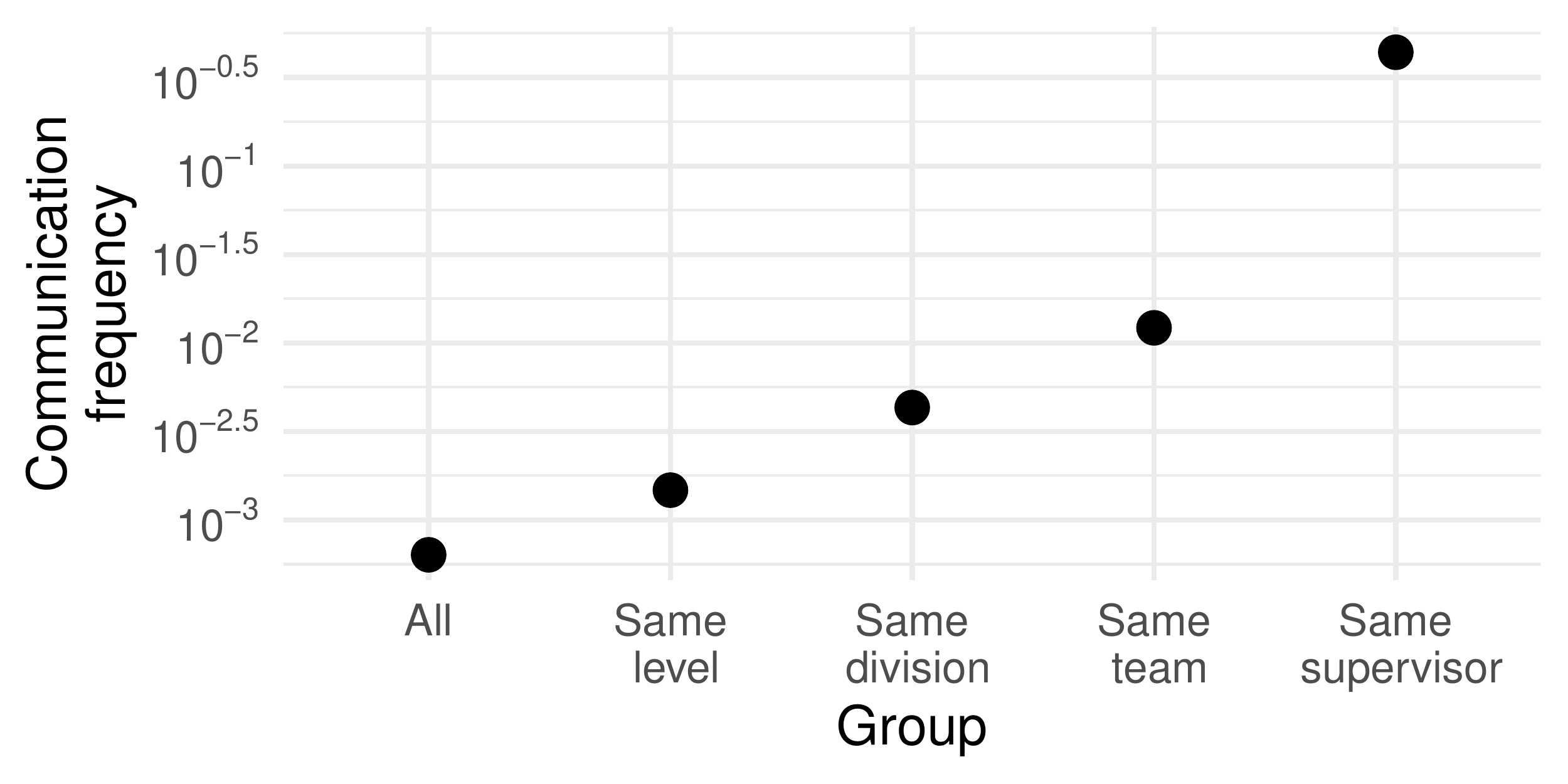}
    \includegraphics[width=.325\textwidth]{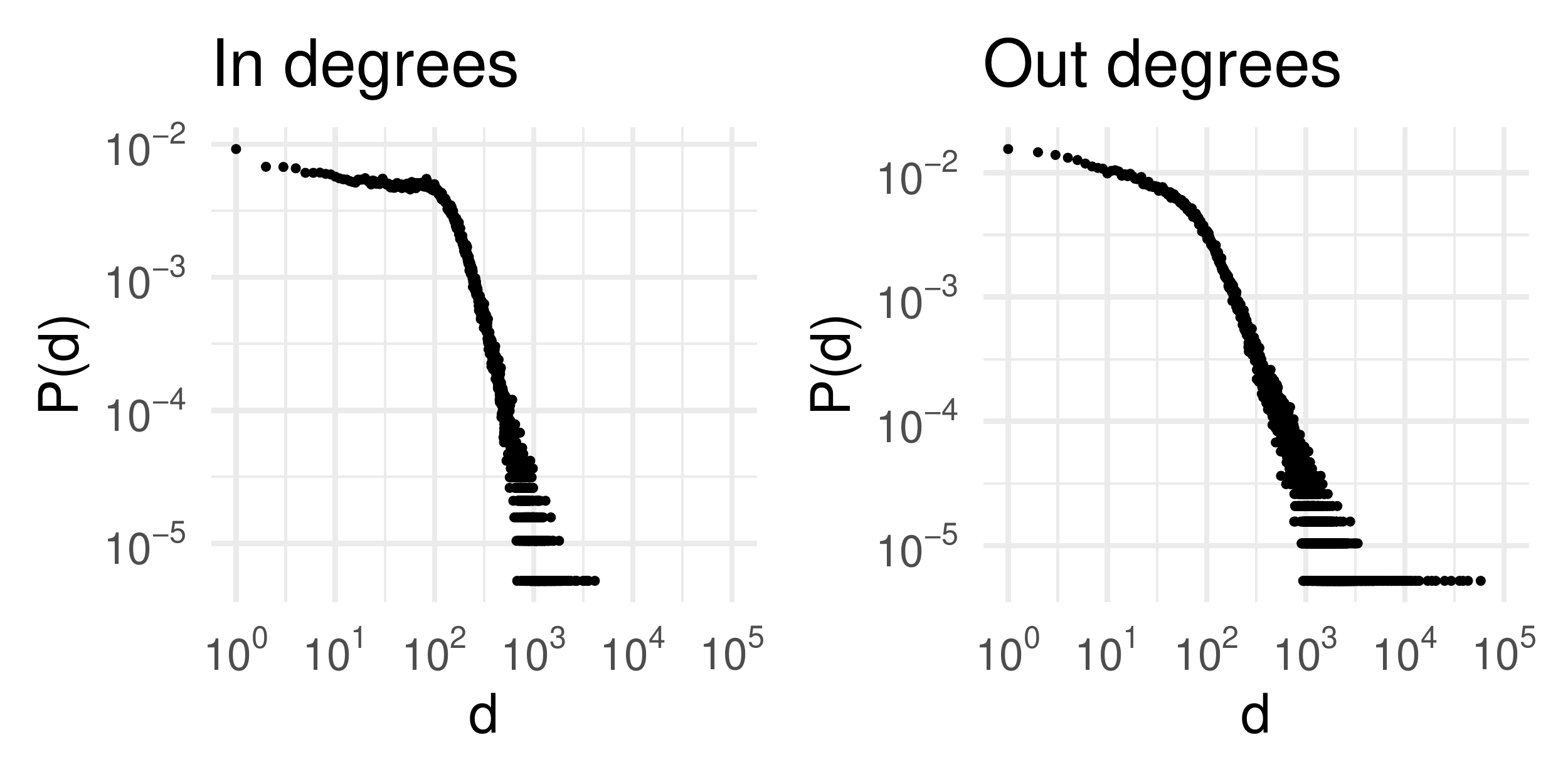}
    \includegraphics[width=.325\textwidth]{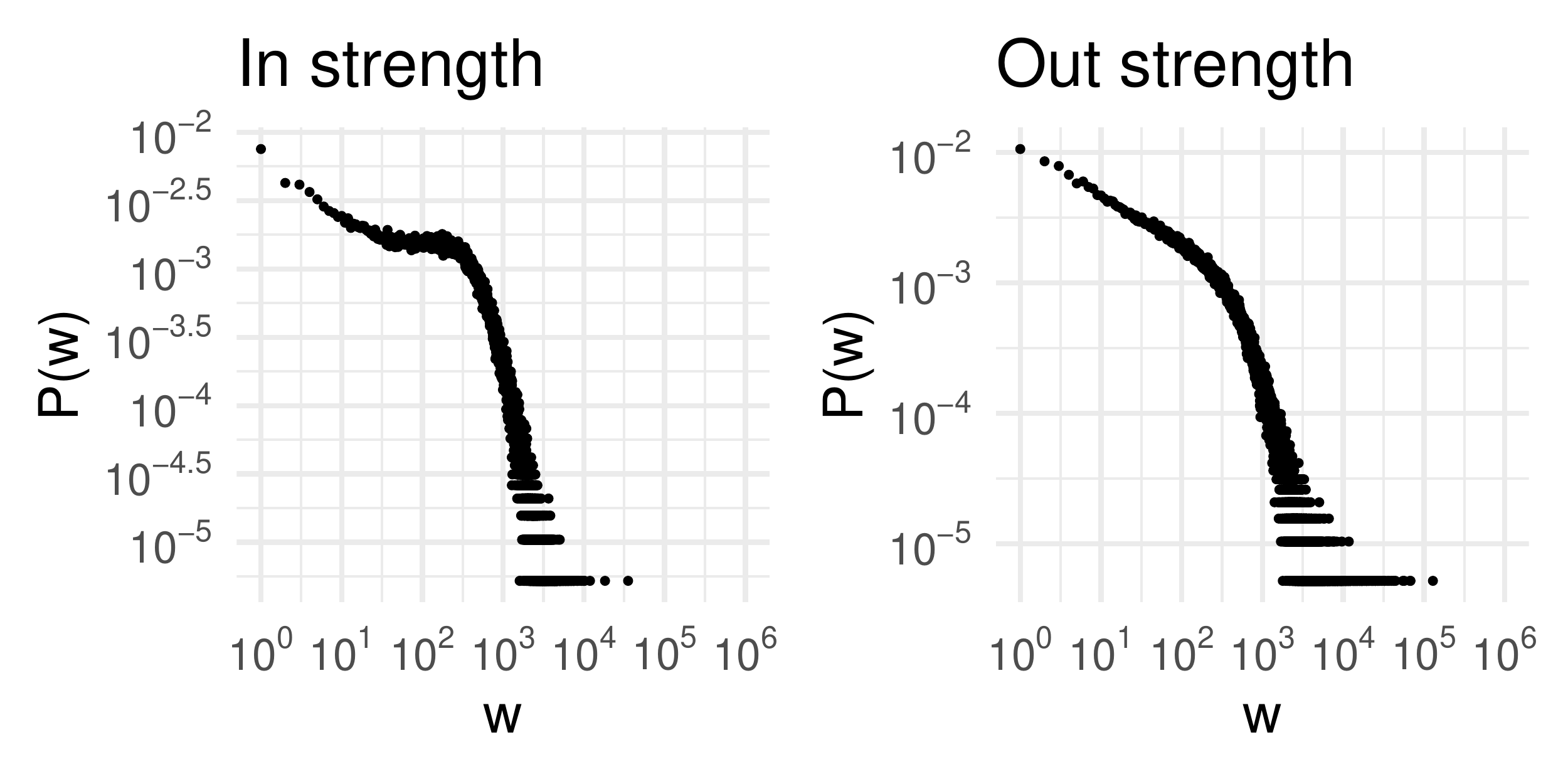}
    \includegraphics[width=.325\textwidth]{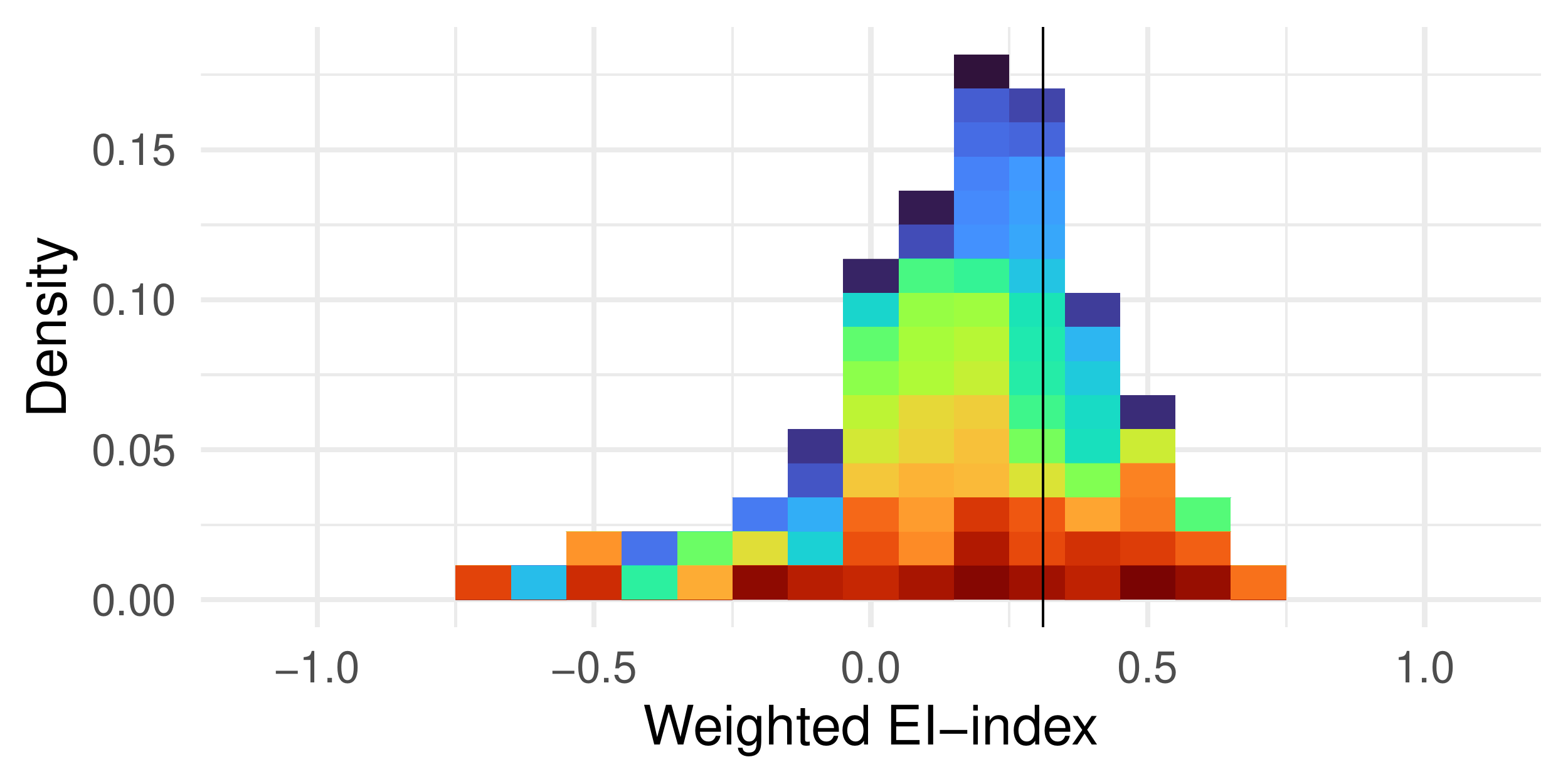}
    \includegraphics[width=.325\textwidth]{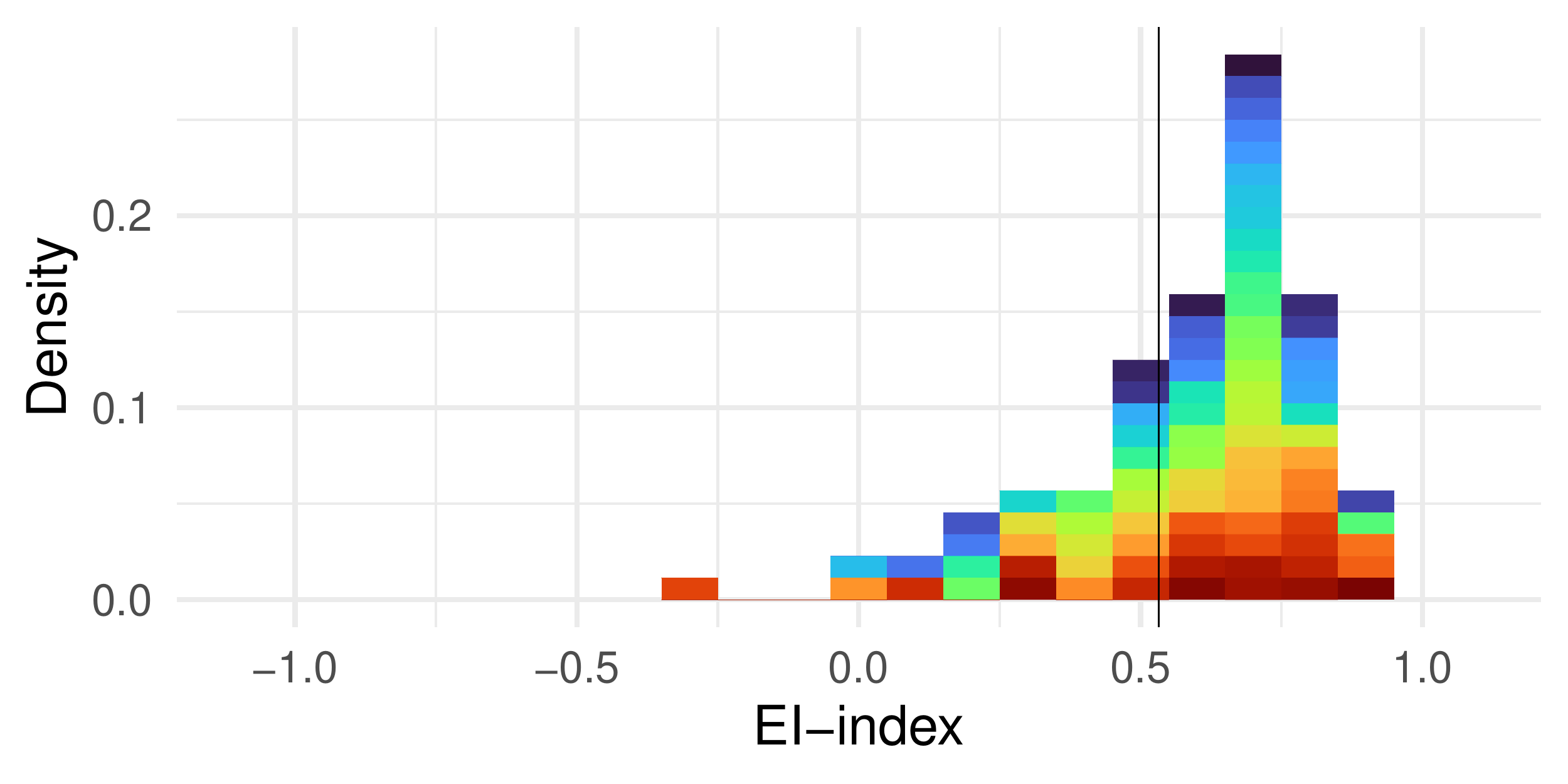}
    \includegraphics[width=.325\textwidth]{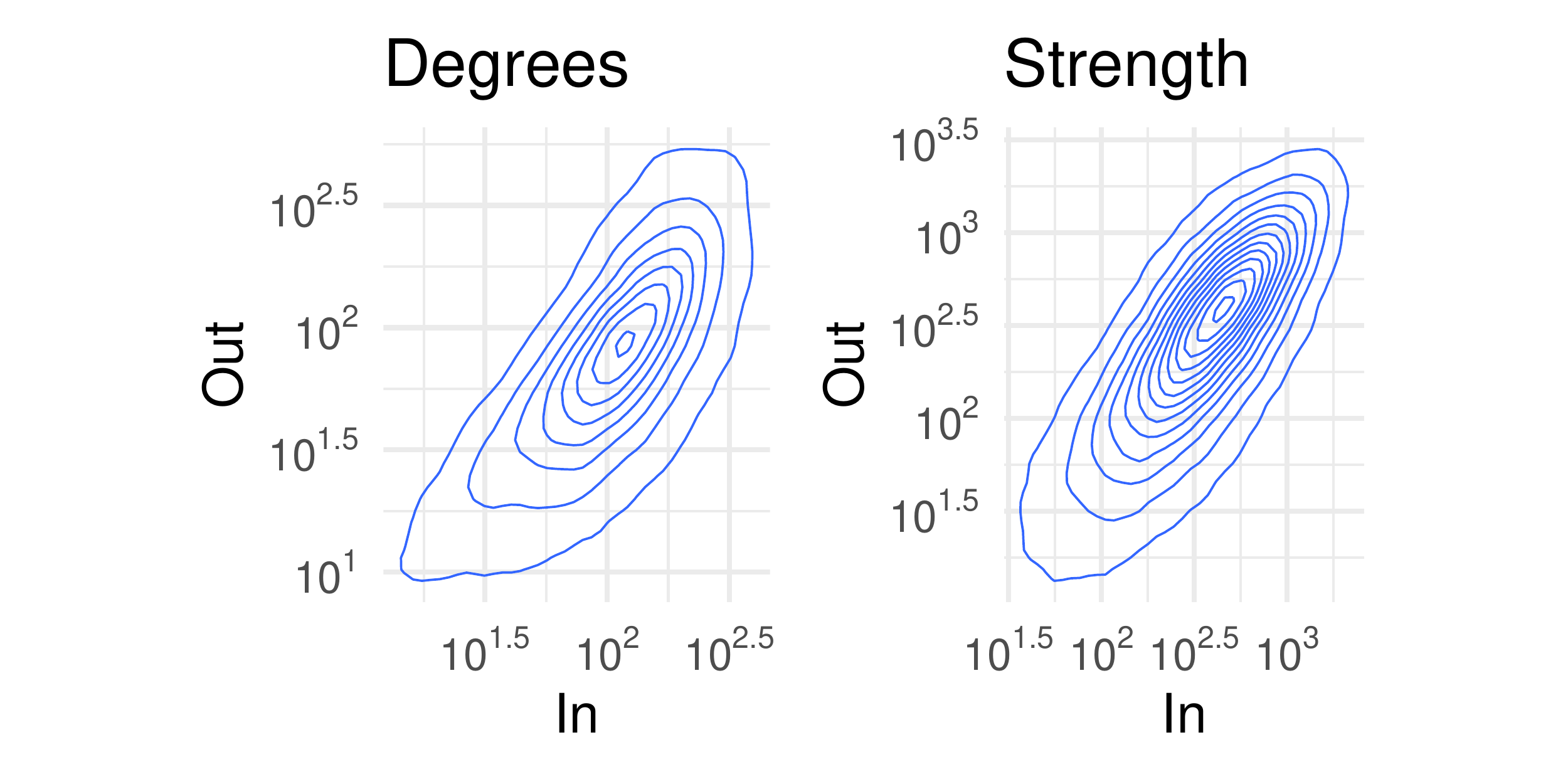}
    \caption{(Top) Left: Frequency of communication within different groups in the organization. Middle: In-degree and out-degree distributions. Right: In-strength and out-strength distributions. (Bottom) Right: Bivariate (in vs out) degree and strength distributions. Middle: Distribution of EI-index across (and colored by) teams, where the vertical line represents the entire organization's EI-index. (Left): Distribution of weighted EI-index across teams.}
    \label{fig:email_eda}
\end{figure}

\subsection{EI-Index}
\citet{krackhardt1988informal} argue that successful organizations respond better to crises when there are links between different teams, hypothesizing that organizations with a positive EI-index will respond effectively to crises.
The EI-index for team $i$ and the organization EI-index are given as
\begin{equation*}
    \text{EI-index}(i) = \frac{EL(i) - IL(i)}{EL(i) + IL(i)} \qquad \text{and} \qquad \text{EI-index} = \frac{\sum_i \big(EL(i) - IL(i)\big)}{\sum_i \big(EL(i) + IL(i)\big)} \enskip ,
\end{equation*}
respectively, where $EL(i)$ and $IL(i)$ are the number of external (between teams) and internal (within team) links, respectively, for team $i$:
\begin{equation*}
    EL(i) = \sum_{\substack{u \in \text{team}~i\\v \not\in \text{team}~i}} 1(A_{uv} + A_{vu} > 0) \qquad \text{and} \qquad
    IL(i) = \sum_{u\neq v \in \text{team}~i} 1(A_{uv} + A_{vu} > 0) \enskip .
\end{equation*}

Figure~\ref{fig:email_eda} shows the distribution of EI-index values across teams; all but one team has a positive EI-index (average 0.58, SD 0.23) with the EI-index of the organization as a whole equal to 0.53.
In contrast, the EI-indices for organizations reported by \citet{krackhardt1988informal} span the range from -1 to 1, but are only reported on networks with very few (25-36) nodes.
Though the number of communication links as a proportion of all possible links is higher with teams than between teams (see Figure~\ref{fig:email_density}), the unweighted EI-index is positive, indicating more external than internal links.
This happens because the number of external and internal links in the EI-index are not normalized by the number of possible external and internal links, and there are far more employees who are not a member of a given team (who could become external links) than there are employees who are members of that team (who could become internal links).
We also show in Figure~\ref{fig:email_eda} the distribution of weighted EI-indices in which $EL(i)$ and $IL(i)$ are replaced with the total volume of external and internal emails, respectively, i.e.
\begin{equation*}
    EL_W(i) = \sum_{\substack{u \in \text{team}~i\\v \not\in \text{team}~i}} A_{uv} + A_{vu} \qquad \text{and} \qquad
    IL_W(i) = \sum_{u\neq v \in \text{team}~i} A_{uv} + A_{vu} \enskip .
\end{equation*}
Here, we see that 21 of the 88 teams have negative weighted EI-index (average 0.15, SD 0.27).

\section{New Measures of Communication}

\subsection{Sent versus received emails}

Next, we investigate the dynamics of emails that are sent by employees compared to those that are received.
Link reciprocity for directed networks has been well studied in both binary \citep{garlaschelli2004patterns} and weighted settings \citep{squartini2013reciprocity}.
The reciprocity of our email network is $r = 0.31$ (SD = 0.01) without the weights and $r = 0.40$ (SD = 0.14) with the weights.
However, we are interested in reciprocity at the individual node level.
For this, we propose two measures of node-level reciprocity.
First, we measure the proportion of sent emails that are part of reciprocated communications, which we refer to as \textit{sent reciprocation}.
That is, the proportion of all email relationships in which $u$ sent at least one email to $v$ among all recipients $v$ who ever sent an email to $u$.
We similarly measure the proportion of received emails, or \textit{received reciprocation}.
For any individual $u$, these are defined as
\begin{equation*}
    \text{SR}(u) = \frac{\sum_{v} 1(A_{uv}>0) 1(A_{vu} > 0)}{\sum_{v} 1(A_{uv} > 0)} \qquad~\text{and}~\qquad \text{RR}(u) = \frac{\sum_{v} 1(A_{vu} > 0) 1(A_{uv} > 0)}{\sum_{v} 1(A_{vu} > 0)} \enskip ,
\end{equation*}
respectively. Figure~\ref{fig:eda_position} shows the difference of these proportions as a function of the employee's level, which we recall is the number of steps below the CEO in the organization, relative to the depth of that employee's team.
This normalization ensures that employees that are at the bottom of their team's hierarchy are grouped together no matter how deep each team is.
We see that employees higher in the organization send emails that are reciprocated more often than they reciprocate emails that they receive.

We also study the dynamics of sent and received emails as a function of position within one's team.
For every individual $u$ in team $i$, the \textit{hierarchical position} \citep{boeva2017analysis} is
\begin{equation*}
    \text{HP}(u) = \frac{\sum_{v \neq u} D_{uv}}{n_i - 1} \enskip ,
\end{equation*}
where $D_{uv}$ is the \textit{hierarchical difference} between $u$ and $v \in \text{team}~i$,
\begin{equation*}
    D_{uv} = \begin{cases}
        +1 & \text{if}~u~\text{is higher than}~v~\text{in the hierarchy} \\
        \ \ 0 & \text{if}~u~\text{is on the same level as}~v~\text{in the hierarchy} \\
        -1 & \text{if}~u~\text{is lower than}~v~\text{in the hierarchy}
    \end{cases} \enskip .
\end{equation*}

Inspired by this differencing, we define \textit{sent position} and \textit{received position} for individual $u$ as
\begin{equation*}
    \text{SP}(u) = \frac{\sum_{v \neq u} 1(A_{uv} > 0) \cdot D_{uv}}{\sum_v 1(A_{uv} > 0)} \qquad~\text{and}~\qquad \text{RP}(u) = \frac{\sum_{v \neq u} 1(A_{vu} > 0) \cdot D_{vu}}{\sum_v 1(A_{vu} > 0)} \enskip ,
\end{equation*}
respectively.
Thus +1 indicates every email was sent (received) to someone below in the organization, -1 to someone above in the organization, and 0 proportionately above and below.
Note that each sum is normalized by the volume sent by the individual.
The difference of these proportions is shown in Figure~\ref{fig:eda_position} as a function of HP.
The lower an employee is in the organization (the more negative their HP), their sent position is, on average, higher than their received position.
This is not surprising since there is a strong negative correlation $(\rho = -0.87)$ between SP and RP, which implies that the difference is going to be more positive the higher you are in the organization.
In short, if an employee sends most emails down the organization, then most emails they receive are likely to be sent from below, and vice-versa.

Finally, we evaluate the claim that the most important members in the organization -- those closest to the CEO -- have more communication connections \citep{cross2002making, nielsen2016work}.
Figure~\ref{fig:eda_position} shows that there is some truth to this claim: on average, both total degree and total strength in the entire email network decrease for employees further down the organizational chart.
We show in Appendix Table~\ref{tab:importance} that, assuming a linear model, the slopes of these trends are significantly less than zero.
By looking at individual teams, we can assess importance with several additional measures of node centrality that we cannot otherwise, for computational reasons, compute on the entire email network.
We compute four measures of centrality -- betweenness, closeness, eigenvalue, and authority \citep[see][for definitions]{kolaczyk2014statistical} -- shown in Appendix Figure~\ref{fig:eda_centrality}.

\begin{figure}
    \centering
    \includegraphics[width=.49\textwidth]{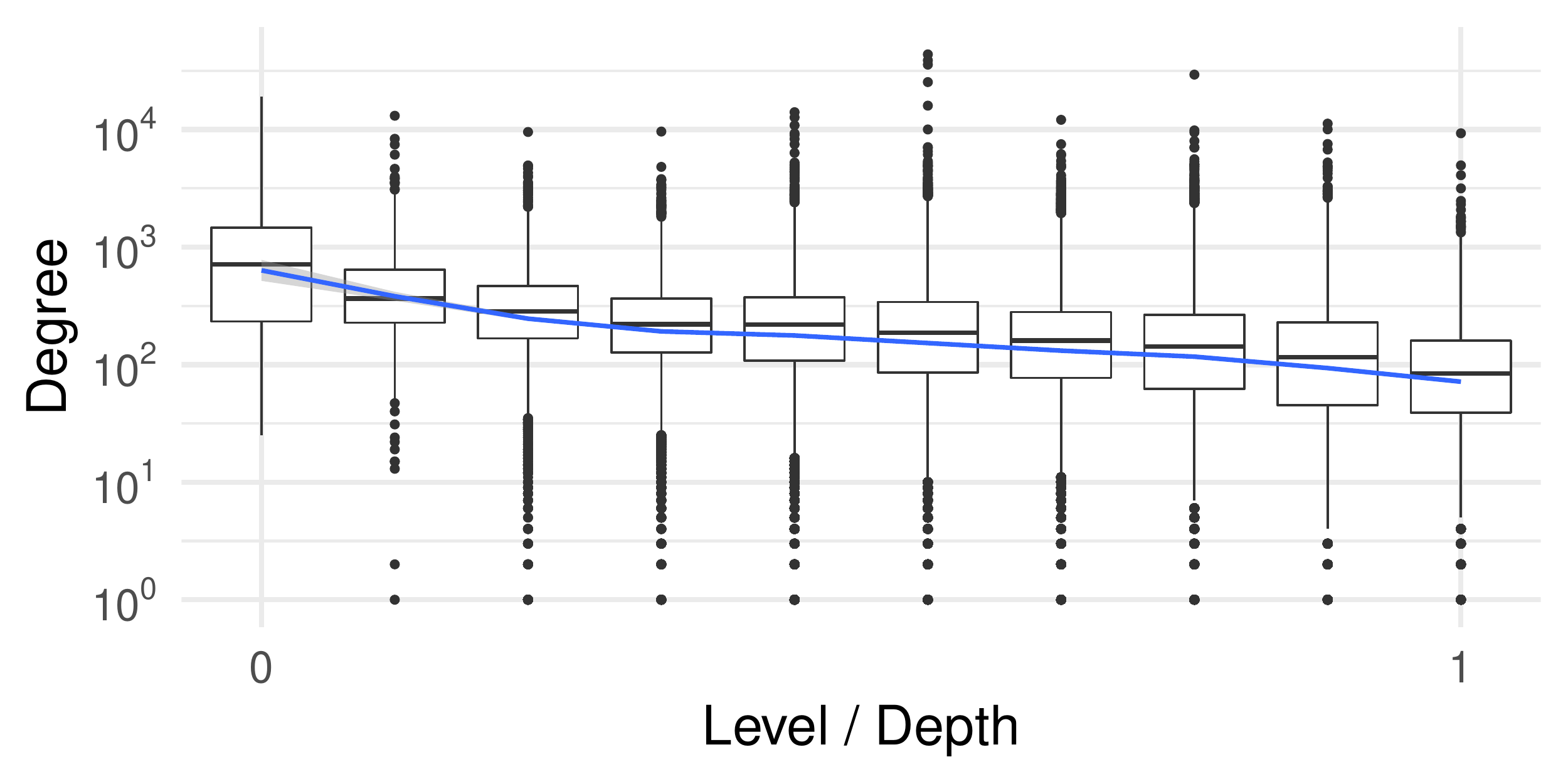}
    \includegraphics[width=.49\textwidth]{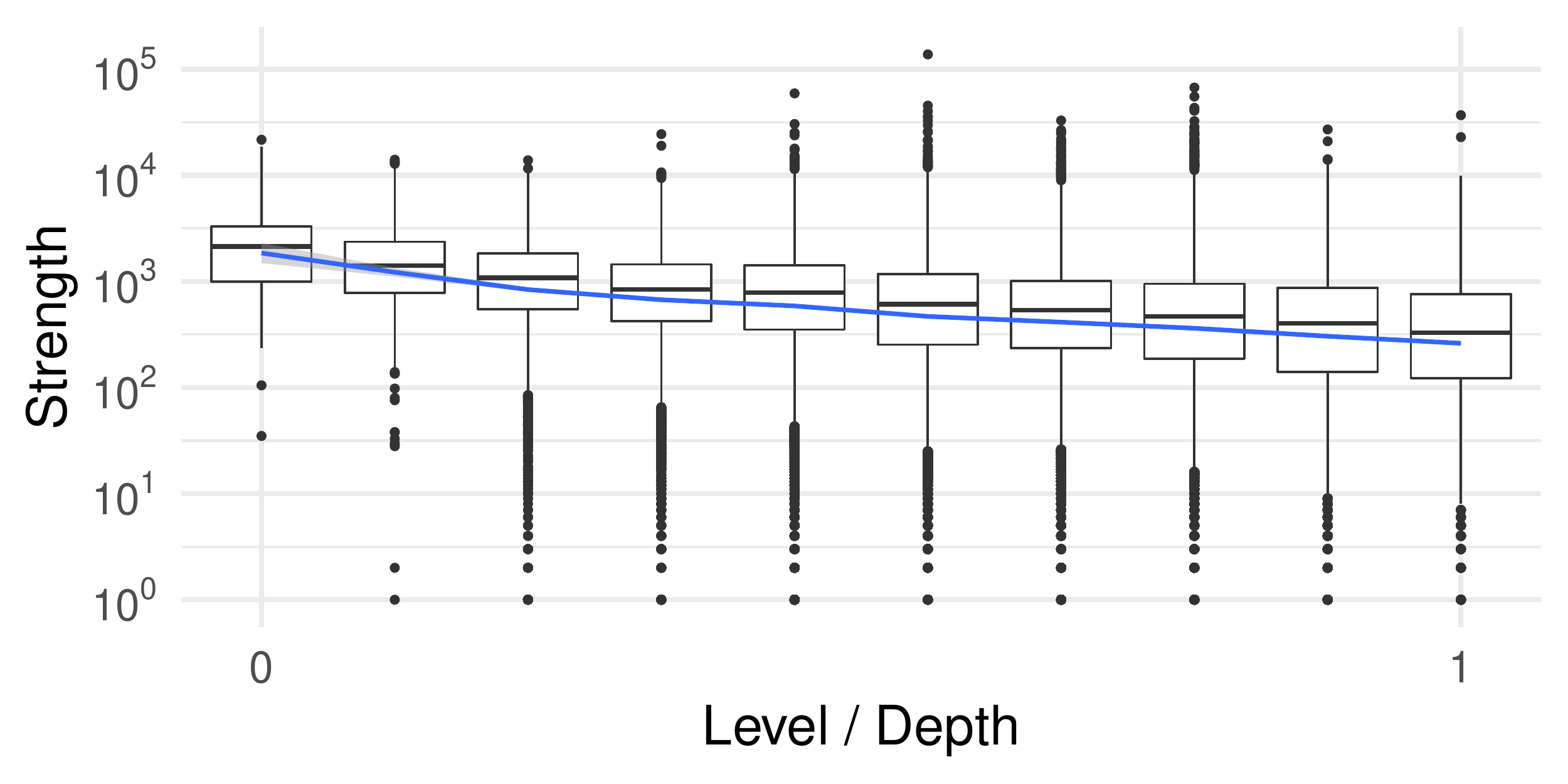}
    
    \includegraphics[width=.49\textwidth]{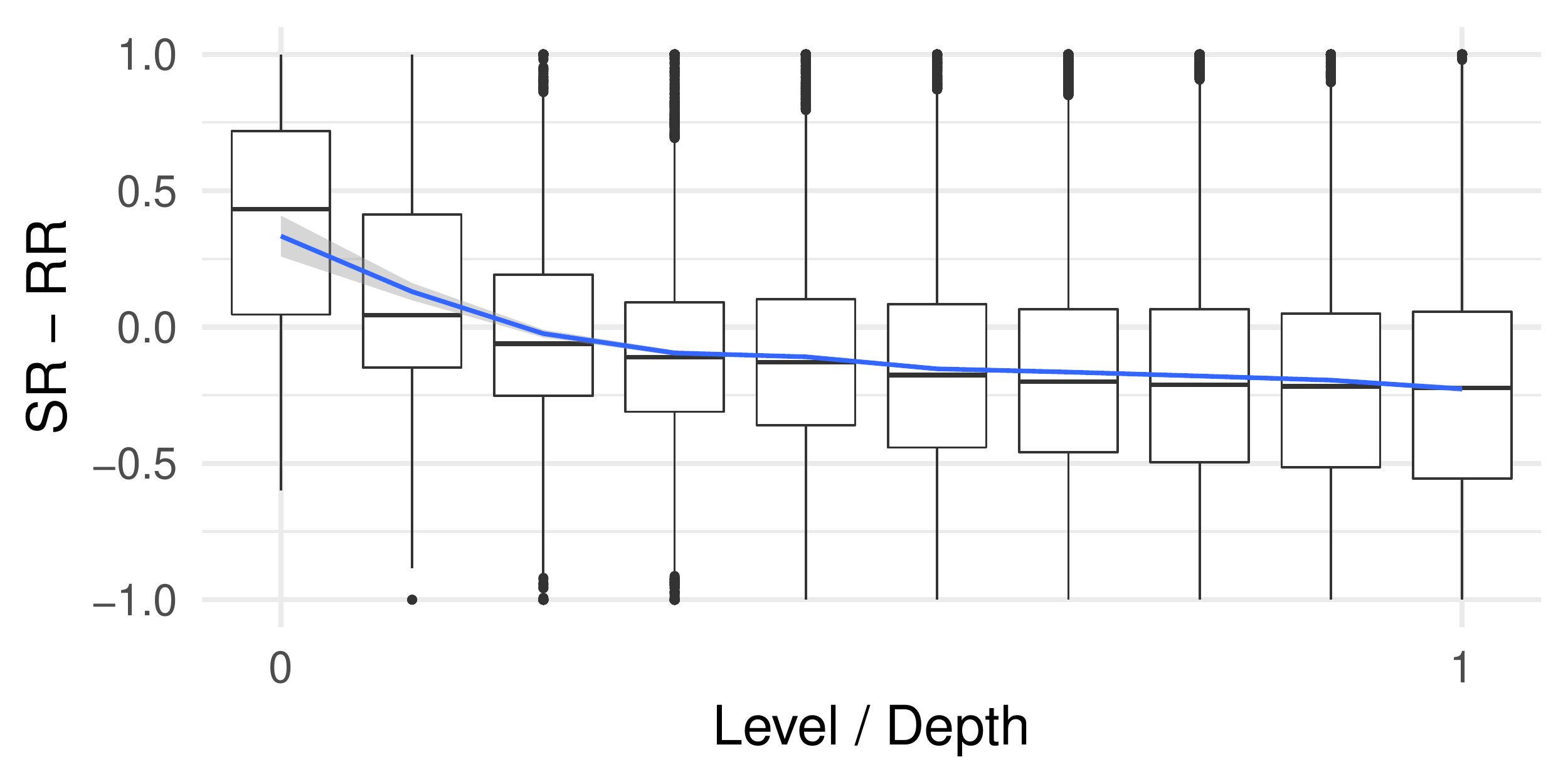}
    \includegraphics[width=.49\textwidth]{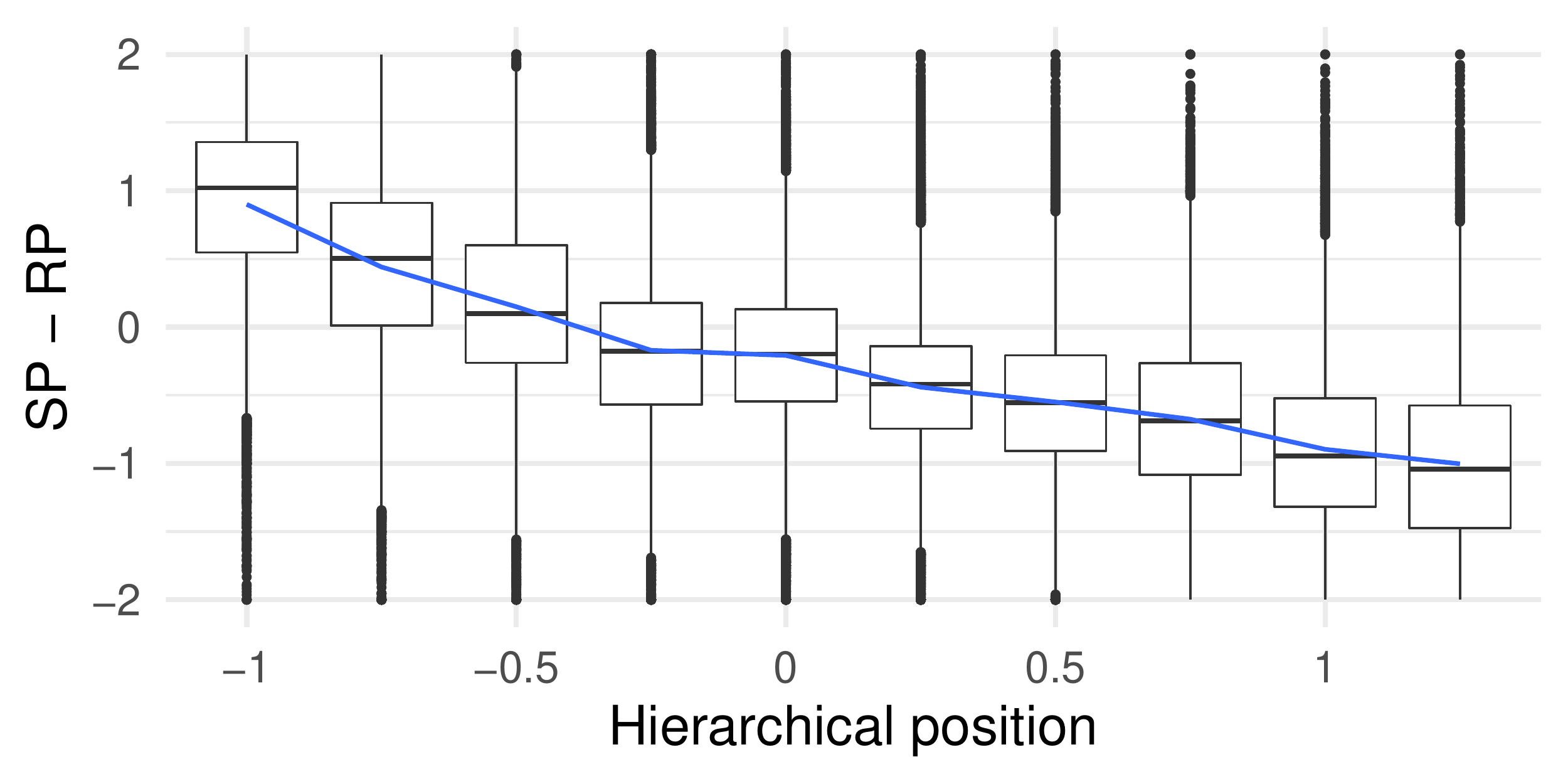}

    \caption{Measures of team-level email communication and reciprocity by organizational position. (Top) Total degree and strength in the entire email network by relative position in the organization. (Bottom) Left: Difference in the proportions SR and RR by relative position in the organization hierarchy. Right: Difference in the signed proportions SP and RP by HP. In all of the plots, we bin position into 10 equal groupings and we show the box plot within each bin along with a smoothed curve (blue) of the individual (non-binned) data.}
    \label{fig:eda_position}
\end{figure}

\subsection{Reporting distance}\label{sec:path_analysis}

Another approach to studying the relationship between organizational structure and frequency of communication is to consider the distance between two employees in the organizational chart.
A path between individuals $u$ and $v$ is a sequence of edges connecting them and the path length is defined as the number of edges in a path.
For each team, we compute three different path-length measures between each pair of employees.

For any two employees $u$ and $v$ in the organizational chart, we count the number of steps up $n^\text{up}(u, v) > 0$ and the number of steps down, $n^\text{down}(v, u) > 0$ in the shortest path from $u$ to $v$.
We describe how to compute these efficiently in Appendix~\ref{sec:n_up_down}.
Note that by symmetry, $n^\text{up}(u, v) = n^\text{down}(v, u)$.

We define three notions of reporting distance as a function of these path-length quantities: \textit{reporting distance}, \textit{signed reporting distance}, and \textit{directed reporting distance}.
Respectively, these distances between $u$ and $v$ are defined as
\begin{align*}
    \text{RD}(u, v) &= n^\text{up}(u, v) + n^\text{down}(u, v) \enskip , \\
    \text{SRD}(u, v) &= n^\text{up}(u, v) - n^\text{down}(u, v) \enskip , \\
    \text{DRD}(u, v) &= \text{RD}(u,v) \cdot \text{sgn}\big(\text{SRD}(u, v)\big) \enskip .
\end{align*}
RD is just the ordinary shortest path length.
SRD is a measure of the total distance travelled up or down the organizational tree to get from $u$ to $v$, which is zero if $u$ and $v$ have the same level, is positive if $u$ is lower than $v$ in the organization, and is negative otherwise.
SRD is close to the definition of ``agony" described in \citet{gupte2011finding}: $\text{Agony}(u,v) = \max\{\text{SRD}(v,u) + 1, 0\}$.
We may also regard SRD as a generalization of hierarchical difference \citep{boeva2017analysis}, since HD is just the sign of SRD.
In Appendix Figure~\ref{fig:hp_srd}, we present sent and received position, defined in terms of SRD, as a function of hierarchical difference generalized with SRD.
Finally, DRD is the reporting distance signed by whether $u$ is higher or lower than $v$ in the organization.

Note that only RD is a true distance; both SRD and DRD can be negative, and neither satisfies symmetry or the identity of indiscernibles, but we prove in Appendix~\ref{sec:rep_dist} Proposition~\ref{prop:tri_eq} that SRD satisfies the equality $\text{SRD}(u, v) = \text{SRD}(u, w) + \text{SRD}(w,v)$ for all $w$ in the tree, and hence SRD is a quasipseudometric on the set of employees \citep{kim1968pseudo}.

Figure~\ref{fig:eda_path} shows the average number of emails exchanged between pairs of employees as a function of these reporting distances. 
Appendix~Figure \ref{fig:eda_path_unweighted} shows analogous plots by communication frequency.
On average, the closer two employees are in the organizational tree, the more they communicate.
This shows that communication increases exponentially by local proximity to others, similar to the the global clustering in Figure~\ref{fig:email_density}.

\begin{figure} 
    \centering
    \includegraphics[width = \textwidth]{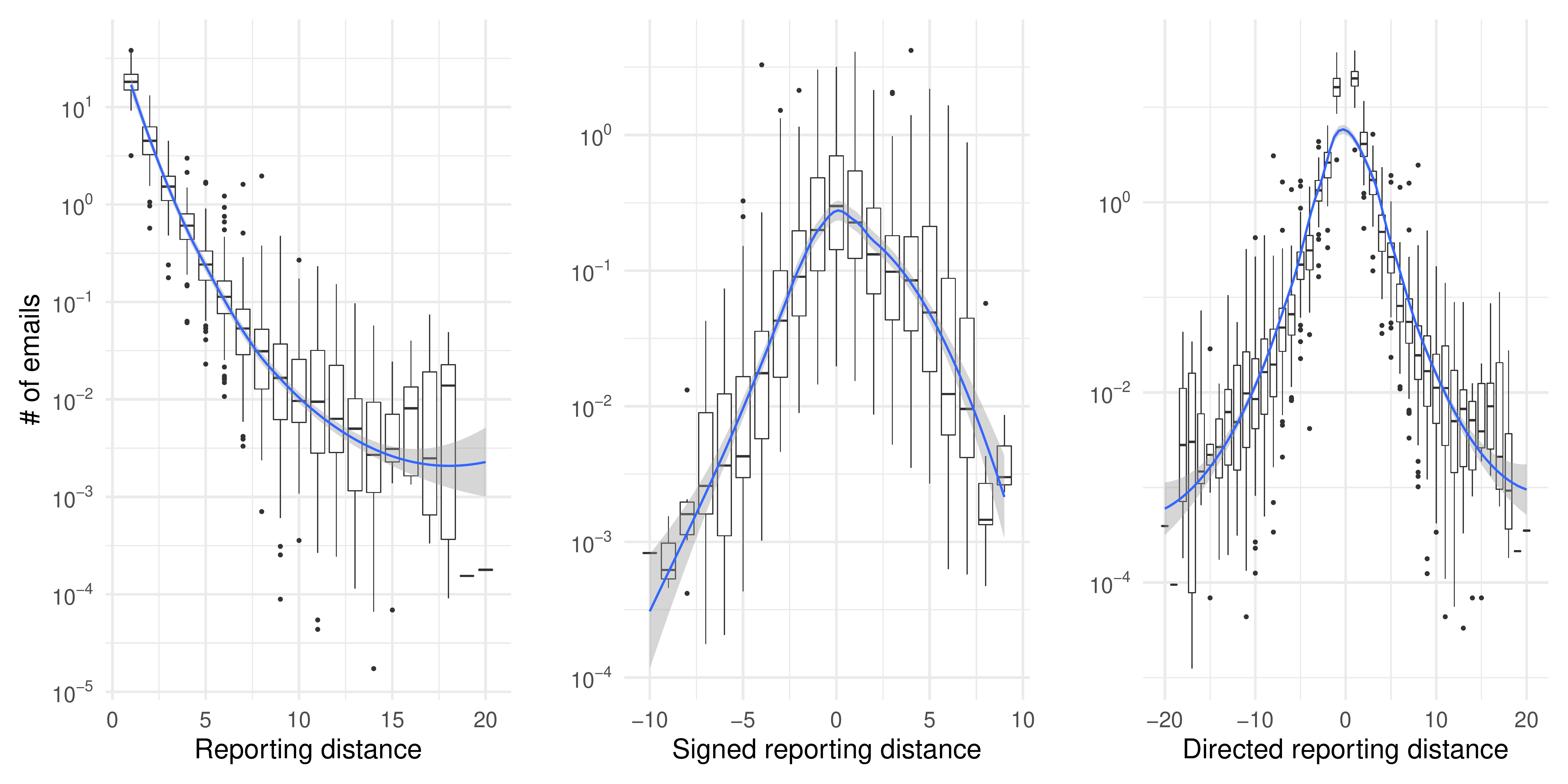}
    \caption{Pairwise reporting distances in the organizational tree and the average number of emails among all pairs in that reporting distance. Reporting distances are computed within each team and the box plots summarize the results across all of the teams. The individual team plots are shown in Appendix Figure~\ref{fig:eda_path_team}.}
    \label{fig:eda_path}
\end{figure}

Does the frequency of communication between employees depend on their relative ranks in the organizational hierarchy?
To assess whether the right and center plots of Figure~\ref{fig:eda_path} are symmetric about zero, we perform two permutation tests.
We define the sets
    \begin{equation*}
        S_k = \{(u,v) : \text{d}(u, v) = k\} \enskip ,
    \end{equation*}
which are all the pairs $(u, v)$ whose distance is $k$.
This allows us to define and compute the following test statistic:
    \begin{equation*}
        t(A) = \sum_{k=1}^{k_{\max}} \Big(\frac{1}{\vert S_k \vert}\sum_{(u, v) \in S_k}A_{uv} - \frac{1}{\vert S_{-k} \vert}\sum_{(u, v) \in S_{-k}}A_{uv}\Big)^2  \enskip ,
    \end{equation*}
where $k_{\max}$ is the maximum absolute distance observed.
If the relationship between distance in the organizational hierarchy and communication is symmetric around 0, then the pairs in $S_k$ and $S_{-k}$ should be exchangeable with respect to their communication levels.
This motivates our permutation test in which we obtain a null distribution by permuting the rows/columns of $A$ within the set $S_k \cup S_{-k}$.
This can be interpreted as randomly permuting the number of emails exchanged among pairs whose distances in the organizational hierarchy are $|k|$.
For each replicate, we obtain a new email network $\tilde{A}$ from which we can compute $t(\tilde{A})$.
By design, the relationship between the distance in the organizational hierarchy and communication in $\tilde{A}$ is symmetric around 0 under this permutation.

We repeat this $500$ times within each team, except for the 4 teams that had $> 10,000$ members, whose permutations were too computationally expensive.
We compare the empirical distribution of the test statistic under permutation $t(\tilde{A})$ with $t(A)$, shown in Appendix Figure~\ref{fig:eda_path_perm}.
We find that directed reporting distance is not symmetric about zero: that is, we reject the null hypothesis of symmetry under this permutation distribution at the empirical 95\% confidence level.
On the other hand, for signed reporting distance, we find the observed result is not significantly different than what we would expect for a truly symmetric relationship.
The empirical distribution is shown in Appendix Figure~\ref{fig:eda_path_perm}.

This implies that, with respect to SRD, communication up and down the organizational hierarchy is symmetric, whereas it is asymmetric with respect to DRD.
This is important because Agony \cite{gupte2011finding} can be defined in terms of SRD, and under the theory that ``higher rank nodes are less likely to connect to lower rank nodes,” tree reconstruction, which we explore in the next section, is performed by minimizing agony.
This is empirically contradicted based on our finding that SRD \textit{is} symmetric.
This finding both refutes the method of minimizing agony, and suggests an improvement, namely redefining Agony in terms of DRD.

\section{Reconstructing the organizational hierarchy}\label{sec:model}

Finally, we assess the existing methodology for reconstructing a tree structure from network data.
Each of the methods makes implicit assumptions about the relationship between the organizational hierarchy and emergent communication dynamics.
\citet{maiya2009inferring} propose a distance-based tree reconstruction model: ``as the distance between individuals within a hierarchy grows, we assume the probability of interaction decays."
\citet{gupte2011finding} propose a tree reconstruction method that minimizes agony in the communication network, based on the idea that ``when people connect to other people who are lower in the hierarchy, this causes them social agony" and thus ``higher rank nodes are less likely to connect to lower rank nodes".
As a benchmark method for tree reconstruction, we compute the minimum spanning tree of the communication network \citep{prim1957shortest}.
The minimum spanning tree of a graph is a tree with the same vertex set that minimizes the sum of the edge weights.
The implicit model for the minimum spanning tree is therefore one that minimizes the total emails exchanged between connected nodes.
As this is contrary to what we find -- the highest rates of communication are between connected nodes (Figure~\ref{fig:eda_path}) -- we also implement the maximum spanning tree.
We refer to these methods as ``Distance," ``Agony," ``Min ST," and ``Max ST," respectively.

To evaluate these approaches, we apply each reconstruction method separately to the team communication networks except for the 16 teams with $>$ 3,000 members.
For Agony, we implement the algorithm from \citet{tatti2014faster}.
For each reconstruction method, we compute the distance from the output tree to the true team organizational structure using the Frobenius distance and the centrality distance \citep{donnat2018tracking}.
For the definitions and interpretations of these tree distances, see Appendix~\ref{sec:tree_dist}.
We also measure the classification rate of the ``manager status" implicit in the reconstruction method, in which a node is a ``manager" if its in degree is positive, i.e. the proportion of nodes whose leaf status in the reconstruction agrees with the true hierarchy.

We note that Agony guarantees the output is a Eulerian graph, which means that the in and out degrees are equal, but this ensures that the graph is not a tree.
Similarly, Distance is guaranteed to produce a spanning graph, but a tree is not guaranteed since the maximum likelihood solution can place a node as the child of an unobserved root.
We find, in fact, that using the email totals as weights in the distance-based model results in this trivial solution (all nodes are children of an unobserved root) for all of the teams, and consequently we only show the results based on the unweighted communication networks.

\begin{SCfigure} 
    \centering
    \includegraphics[width=.65\textwidth]{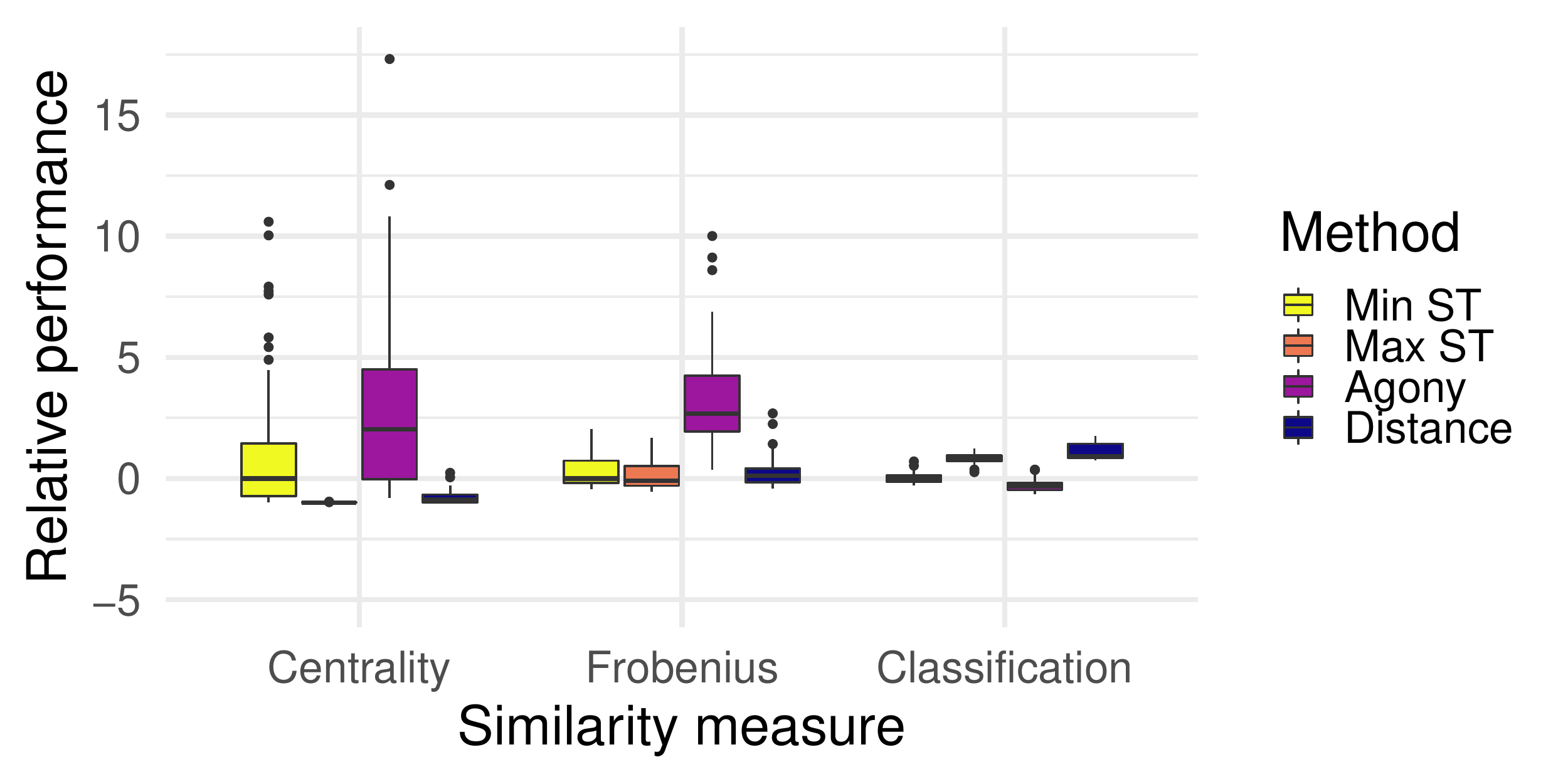}
    \caption{Performance of methods for reconstructing the organizational trees from the communication networks -- Min and Max ST \citep{prim1957shortest}, Distance \citep{maiya2009inferring}, and Agony \citep{gupte2011finding} -- relative to the median distance of Min ST. Negative values of Centrality and Frobenius, and higher values of Classification, indicate estimates closer to the true structures.}
    \label{fig:tree}
\end{SCfigure}

The results are given in Figure~\ref{fig:tree} and show that neither Agony nor Distance perform consistently better at reconstructing features of the organizational hierarchy than the baseline.
Instead, MaxST, which is not explicitly based on a theory of communication, performs the best.
We include the absolute performance of each method in Appendix Figure~\ref{fig:tree_abs}.
While the Frobenius and centrality metrics are hard to interpret, we can see that the classification accuracy is reasonable for both Max ST and Agony.
We also provide an analysis of the average difference in predicted level as a function of level down from the root.
This serves as a goodness of fit test restricted to increasing ranges of levels.
Appendix Figure~\ref{fig:tree_levels} shows that Min ST and Max ST are both better at recovering the top of the hierarchy than the lower levels.

To understand why these methods are lacking, recall that Agony is based on the hypothesis that higher ranking nodes are less likely to communicate with lower ranking nodes.
Under this implicit theory of communication dynamics, communication frequency ought to be asymmetrically and linearly decreasing around zero agony.
However, we find in Section~\ref{sec:path_analysis} that SRD, which is close to agony, is not statistically significantly asymmetric about zero \citep[see][for a non-linear measure of agony]{gupte2011finding}. 
We also do not observe a linear relationship between level differences in the organizational hierarchy and communication patterns: Figure~\ref{fig:eda_path}, whose vertical axis is on the log scale, shows that the relationship decreases nonlinearly away from zero.
Moreover, the relationship may not even be monotonically decreasing: there is more communication between high- and low-ranking employees than would be expected if communication decreased monotonically with reporting distance.
One interpretation of this finding is that there is less communication friction between nodes that are distant in the organizational tree than might be expected under an ``Agony'' theory of communication.  Instead, direct lines of communication emerge across long organizational distances, resulting in an emergent communication network with small-world properties.
This non-linearity and non-monotonicity also partially explains why the distance-based method from \citet{maiya2009inferring} may fail to accurately reconstruct the true organizational hierarchy.
Another possible reason is that the optimization algorithm in \citet{maiya2009inferring} is greedy and may arrive at a local solution that is not the global optimum.

Finally, the Frobenius distance, which counts edge and non-edge discrepancies equally necessarily takes small values for ``Distance" since a tree only has $n-1$ edges (out of $\binom{n}{2}$ possible edges) and the Distance reconstruction procedure tends to output a graph with fewer edges. On the other hand, Frobenius distance is large for Agony, since Agony produces a Eulerian graph that tends to have many edges.

\section{Discussion}\label{sec:discussion}

In this study, we find that in a large and successful software company, communication patterns have a strong association with organizational structure and that communication between employees increases with their organizational proximity, both globally across the organization (Figure~\ref{fig:email_density}) and locally within teams (Figure~\ref{fig:email_eda}).
Figure~\ref{fig:email_density} also provides evidence of the ``rich club" phenomenon at the top of the organization in which the most central employees are embedded in dense communities \citep{colizza2006detecting, dong2015inferring}.
Similarly, Figure~\ref{fig:eda_position} and Appendix Figure~\ref{fig:eda_centrality} show that measures of employees' centrality or importance in the communication network are larger for employees higher in the organizational chart, in agreement with several substantive theories of organizational behavior \citep{namata2006inferring, michalski2011matching, wang2013analyzing}.

Inspired by the use of hierarchical differences in the hierarchical position metric \citep{boeva2017analysis}, we have introduced new statistical measures of communication reciprocity.
We find that information flows asymmetrically -- more frequently up than down the organizational hierarchy -- and that reciprocation of communication depends on employees' position in the organization (Figure~\ref{fig:eda_position}).
In particular, high-ranking employees send emails that are reciprocated more often than they reciprocate emails.
This finding supports agony \citep{gupte2011finding} as a useful measure of organizational distance, since communication frequency decreases as the distance between employees increases.
However, we find that the emergent communication dynamics are much more complex than those predicted by existing theories. Consequently the reconstruction methods based on those overly simplistic theories inadequately capture the true relationships between these topological structures.
Agony, for example, supposes that communication in social networks is governed by a latent social hierarchy in which more popular individuals (higher in the social hierarchy) are less likely to communicate with individuals below them in the social hierarchy.
However, based on the lack of significance of our permutation test about the relationship between agony (equivalent to signed reporting distance) and communication frequency, as well as the poor performance of the agony-based reconstruction method in Figure~\ref{fig:tree}, this theory may not be an accurate model for how communication works in an organization, which is not surprising since we should expect information to flow both up and down in a functioning organization.
Similarly, \citet{maiya2009inferring} assume communication decreases monotonically with reporting distance, which is not what we find in Figure~\ref{fig:eda_path}.
Instead, a more realistic distance-based method for organizations might encourage ``small-world" links between upper management and low-level employees, as well as asymmetric communication up and down the hierarchy.

There are several limitations in this work.
We were unable to match 11\% of employees across the communication and organizational structures, which could affect team and global summary measures.
In particular, if the missing employees had roles or structural positions substantially unlike those of observed employees, this could result in selection bias that might alter the associations we found between network characteristics and role.
The temporal resolution of communication data is another possible limitation: although our measures of communication are observed rather than surveyed, the counts are aggregated over the entire month.
For this reason, we have not conducted a temporal/longitudinal analysis of communication dynamics.
We also do not have information about individual emails and multiple recipients, which prevents us from filtering out mass emails from super-senders \citep{guimera2006real, onnela2007structure} and from studying who initiates communication as a function of position in the hierarchy.
Similarly, we do not have meta-data on the employees and teams.
Most conspicuously, we are lacking productivity/performance metrics that would allow us to make concrete business suggestions about how to optimize the organizational structure if we had productivity measures on teams or employees.
However, this suggests a causal analysis in order to disentangle whether organization links facilitate communication, or if people in dense organizations are naturally more communicative.
Additionally, more contextual information (e.g. tenure at the company, previous positions within the company, projects assigned during the analysis time, etc.) would allow for more substantive interpretations of the results.
Finally, computations for the non-degree centrality measures, the reporting distances, and the reconstruction methods were not feasible on the entire email network, and it is possible that the individual team results are not reflective of the entire organization.

The study of organizational communication -- in large and successful businesses -- is still an emerging field. 
As we have shown, some theories of organization and communication capture important features of real-world communication patterns. But these theories may be too simplistic to adequately explain or predict organizational structures using communication data alone.  Empirical validation of organizational theories in a real data set, as we have tried to do here, is one way for management scholars to ensure that substantive hypotheses about communication dynamics match reality. Better measures of directional communication flow within organizational structures -- like the signed and directional reporting distances proposed above -- could help researchers understand these emergent communication dynamics.

Likewise, improved methods for reconstructing organizations should be rooted in realistic ideas about how directional communication dynamics relate to the organizational structure, rather than facile theories of employee popularity or importance.
Development of realistic explanatory models of communication conditional on an organizational tree might permit more accurate reconstruction of organizational trees.
The most promising direction is to generalize the distance-based approach in \citet{maiya2009inferring},
\begin{equation*}
    p(A_{uv} \mid T) = f(\alpha \cdot d_\beta(u, v)) \enskip ,
\end{equation*}
where $f$ is some function of some distance -- such as RD, SRD, or DRD -- between $u$ and $v$ in the tree $T$, and $\alpha$ and $\beta$ are hyperparameters.
This conditional likelihood suggests a Bayesian approach, which will require the development of priors on the space of trees.
The obvious bottleneck is the computational challenge: the pairwise distances need to be recomputed each time $T$ changes, which makes reconstruction at the scale of a large organization difficult.
Additionally, the evaluation and theoretical analysis of such methods will lag until there are more measures between trees.

By conditioning on a time-indexed organizational tree, a longitudinal analysis could reveal how changes in the organizational structure map to changes in communication, revealing individual responses to shock analogous to the team responses studied in \citet{athreya2022discovering}.
The addition of time will require care to track changes throughout the organization, both individuals that leave, join, or are promoted and teams that are added, removed, merged, or split.
This opens the door to the study of  retention, promotion, and turnover, and the role that position and distance have as a function of time.



Finally, our new measures, which provide a theory-based approach to measuring the
correspondence between an unknown or inferred hierarchy and a general network, are generalizable to any network analysis in which a latent hierarchy is known,
suspected, or desired.
Beyond the organizational literature, we believe this and future work may be useful in many other areas: social networks with latent social hierarchies; brain networks with latent
functional/structural hierarchies; biological/ecological correlation networks with latent phylogenetic trees; gene co-expression networks with latent gene hierarchies;
flow networks with latent structure.
These applications would require identifying the particulars of how the respective networks emerge as a function of their corresponding hierarchies, which we expect will differ from organizational hierarchies.
It is also possible that the full tree will not be available in other studies or applications.
Our measures will still apply to hierarchies at the community level, i.e. when nodes belong to communities that are hierarchically structured, but new methods may be needed for when only a partial tree or pairwise tree distances are available.

\section*{Acknowledgement}
We are grateful to the Editor and two reviewers for their valuable comments.
We are also grateful to Jonathan Larson and Carey Priebe for their helpful discussion of our work.
This work was supported by a grant from the National Institutes of Health Eunice Kennedy Shriver National Institute of Child Health and Development (1DP2HD091799-01).












\bibliographystyle{imsart-nameyear}
\bibliography{references.bib}

\pagebreak
\setcounter{equation}{0}
\setcounter{figure}{0}
\setcounter{table}{0}
\setcounter{page}{1}
\setcounter{section}{0}
\makeatletter
\renewcommand{\theequation}{S\arabic{equation}}
\renewcommand{\thefigure}{S\arabic{figure}}
\renewcommand{\bibnumfmt}[1]{[S#1]}
\renewcommand{\citenumfont}[1]{S#1}

\begin{frontmatter}
\title{Supplementary Material:\\ Communication network dynamics in a large organizational hierarchy}
\runtitle{Communication network dynamics}

\begin{aug}
\author[A]{\fnms{Nathaniel} \snm{Josephs}},
\author[B]{\fnms{Sida} \snm{Peng}},
\and
\author[C,D,E,F,G]{\fnms{Forrest W.} \snm{Crawford}}

\runauthor{Josephs et al.}

\address[A]{Department of Statistics, North Carolina State University}
\address[B]{Office of Chief Economist, Microsoft Research}
\address[C]{Department of Biostatistics, Yale School of Public Health}
\address[D]{Department of Statistics \& Data Science, Yale University}
\address[E]{Department of Ecology \& Evolutionary Biology, Yale University}
\address[F]{Yale School of Management}
\address[G]{RAND Corporation}
\end{aug}

\end{frontmatter}


\section{Analyses of communication and organizational structure in the Enron corpus}

The Enron email corpus consists of email communications among a group of senior employees of the firm in advance of its collapse in 2001 \citep{klimt2004enron}, and has been used to study the relationship between organizational hierarchy and communication patterns.
We summarize relevant analyses in Table~\ref{tab:enron}.

\begin{table}[H]
\resizebox{\textwidth}{!}{%
\begin{tabular}{@{}|l|l|l|@{}}
\toprule
\multicolumn{1}{|c|}{\textbf{Citation}}          & \multicolumn{1}{c|}{\textbf{Task}}    & \multicolumn{1}{c|}{\textbf{Method}}                    \\ \midrule
\rowcolor[HTML]{EFEFEF} 
\citet{shetty2004enron}       & Classify rank & Entropy                         \\
\citet{namata2006inferring}   & Classify rank & Email counts                    \\
\rowcolor[HTML]{EFEFEF} 
\citet{rowe2007automated}      & Classify rank & Flow and topological statistics \\
\citet{creamer2007segmentation} & Reconstruct organization              & Classify rank then add edges \\
\rowcolor[HTML]{EFEFEF} 
\citet{hossain2009effect}     & Classify rank & Centrality                      \\
\citet{zhang2009analyzing}      & Predict missing rank                  & Sent vs received email counts                           \\
\rowcolor[HTML]{EFEFEF} 
\citet{michalski2011matching} & Classify rank & Centrality                      \\
\citet{palus2011evaluation}     & Compare hierarchies & Hierarchical Position vs Social Score                   \\
\rowcolor[HTML]{EFEFEF} 
\citet{wang2013analyzing}     & Classify rank & PageRank modification           \\
\citet{dong2015inferring}     & Classify rank & Structural holes                \\
\rowcolor[HTML]{EFEFEF} 
\citet{nurek2020combining}    & Classify rank & Exogenous nodal information     \\ \bottomrule
\end{tabular}
}
\caption{Summary of the use of the Enron corpus \protect\citep{klimt2004enron} in the literature.}
\label{tab:enron}
\end{table}

\section{Scale-free analysis}\label{sec:scale_free}

A primary feature of network analyses is the study of the network's degree distribution, and one of the most prevalent beliefs is the ubiquity of scale-free networks \citep{albert2002statistical}.
However, a more recent survey concludes that scale-free networks are rare \citep{broido2019scale}.
Using the same methodology, we classify the the email networks at the team level, which is summarized in Figure~\ref{fig:scale_free}.
Note that in our analysis, we include every team even though only 9 of the 88 teams satisfies the original inclusion criteria from \citet{broido2019scale} that the mean degree is less than $\sqrt{n}$.

\begin{figure} 
    \centering
    \includegraphics[width=.5\textwidth]{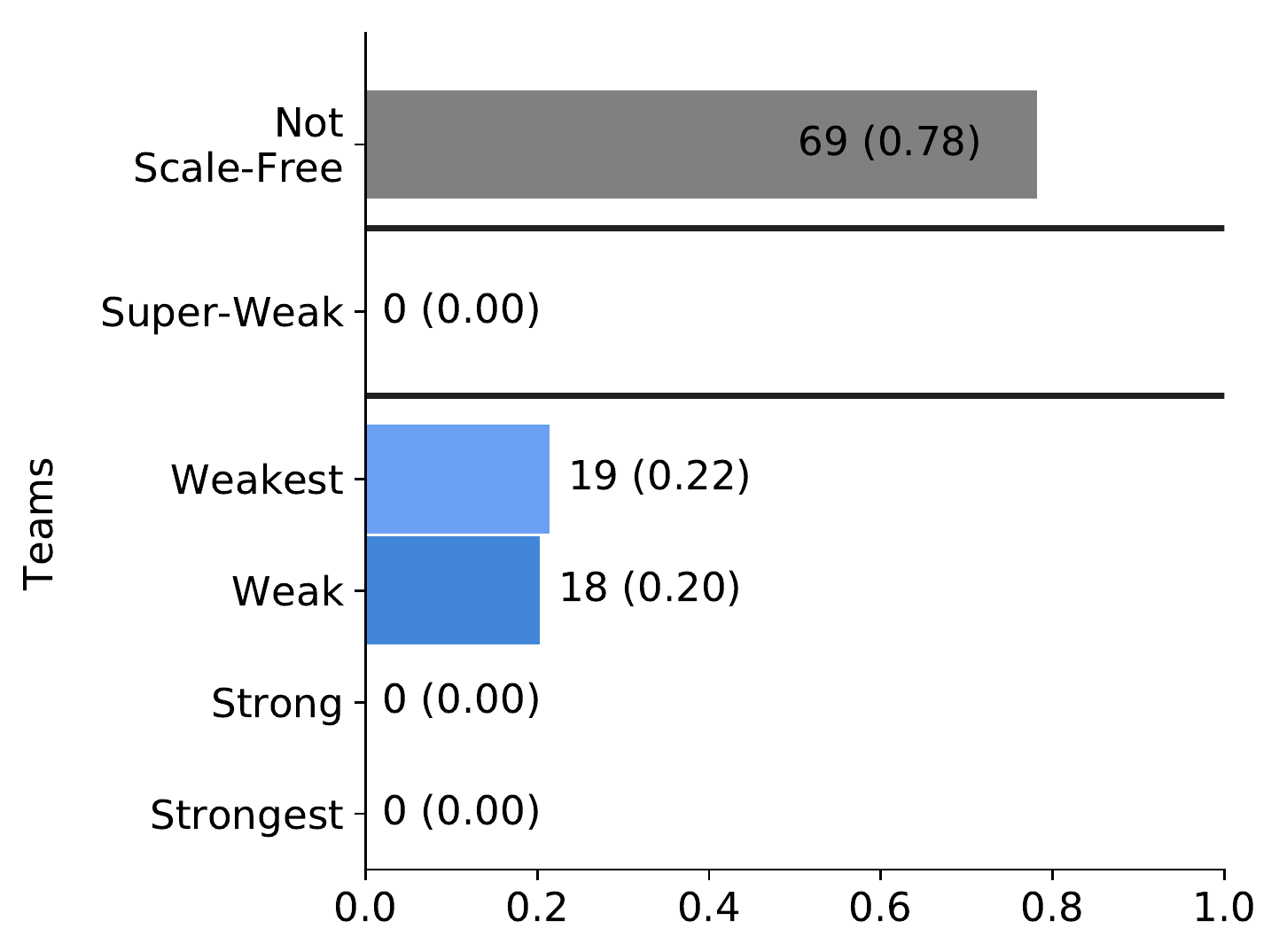}
    
    \caption{Scale-free analysis from \protect\citet{broido2019scale} performed on the team level.}
    \label{fig:scale_free}
\end{figure}

\section{Comparing network summary statistics}

There are many network summary statistics that we can compute for both the email network and organizational tree.
In Figure~\ref{fig:eda}, we show four of the clearest relationships between individual network summary statistics of a team's organizational tree (x-axis) and their corresponding email network (y-axis).
Starting at the top left and going clockwise, the Pearson correlations are $\rho~=-0.84, 0.75, 0.58$, and $-0.65$.

\begin{figure} 
    \centering
    \includegraphics[width = 0.49\textwidth]{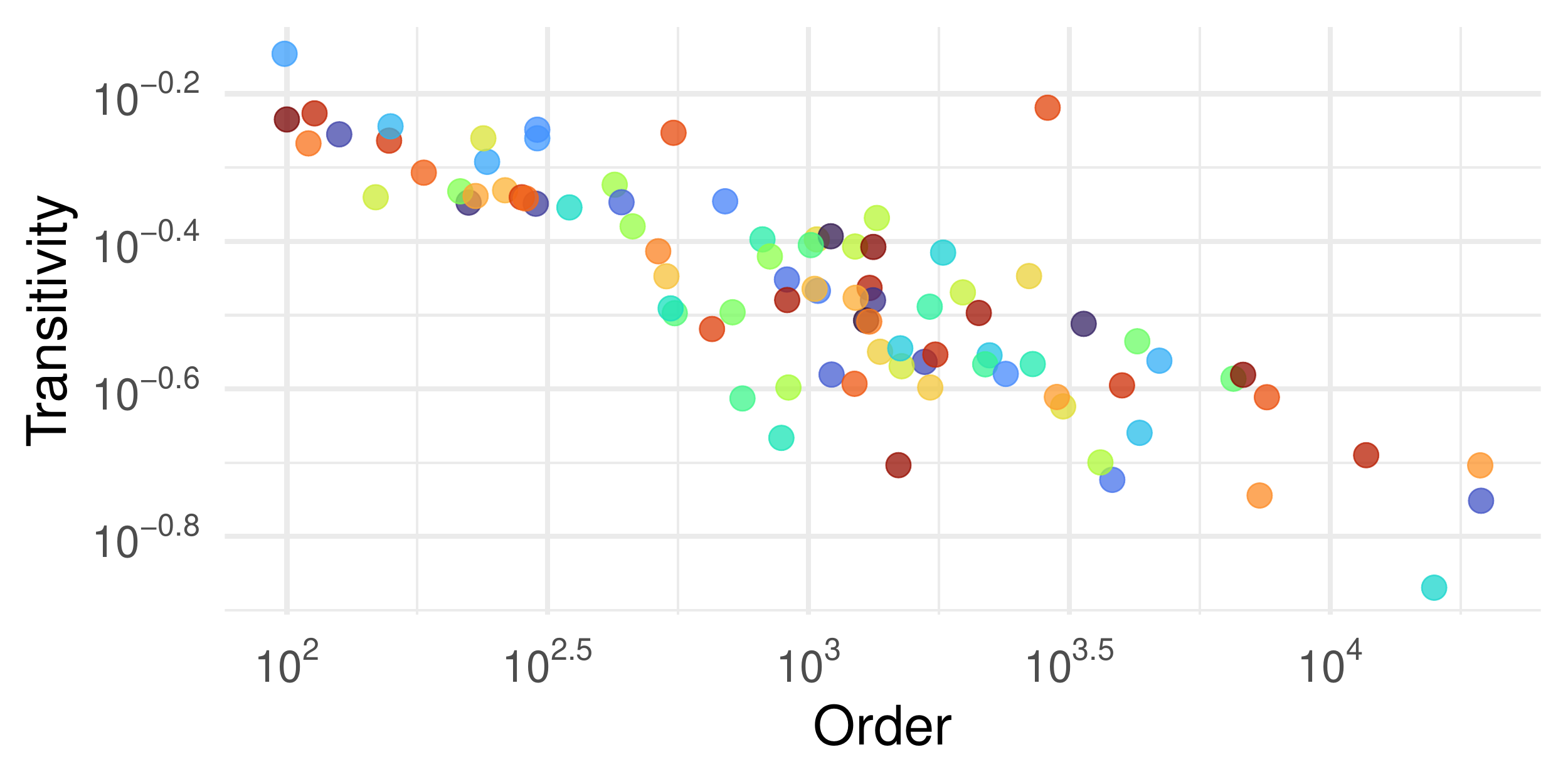}
	\includegraphics[width = 0.49\textwidth]{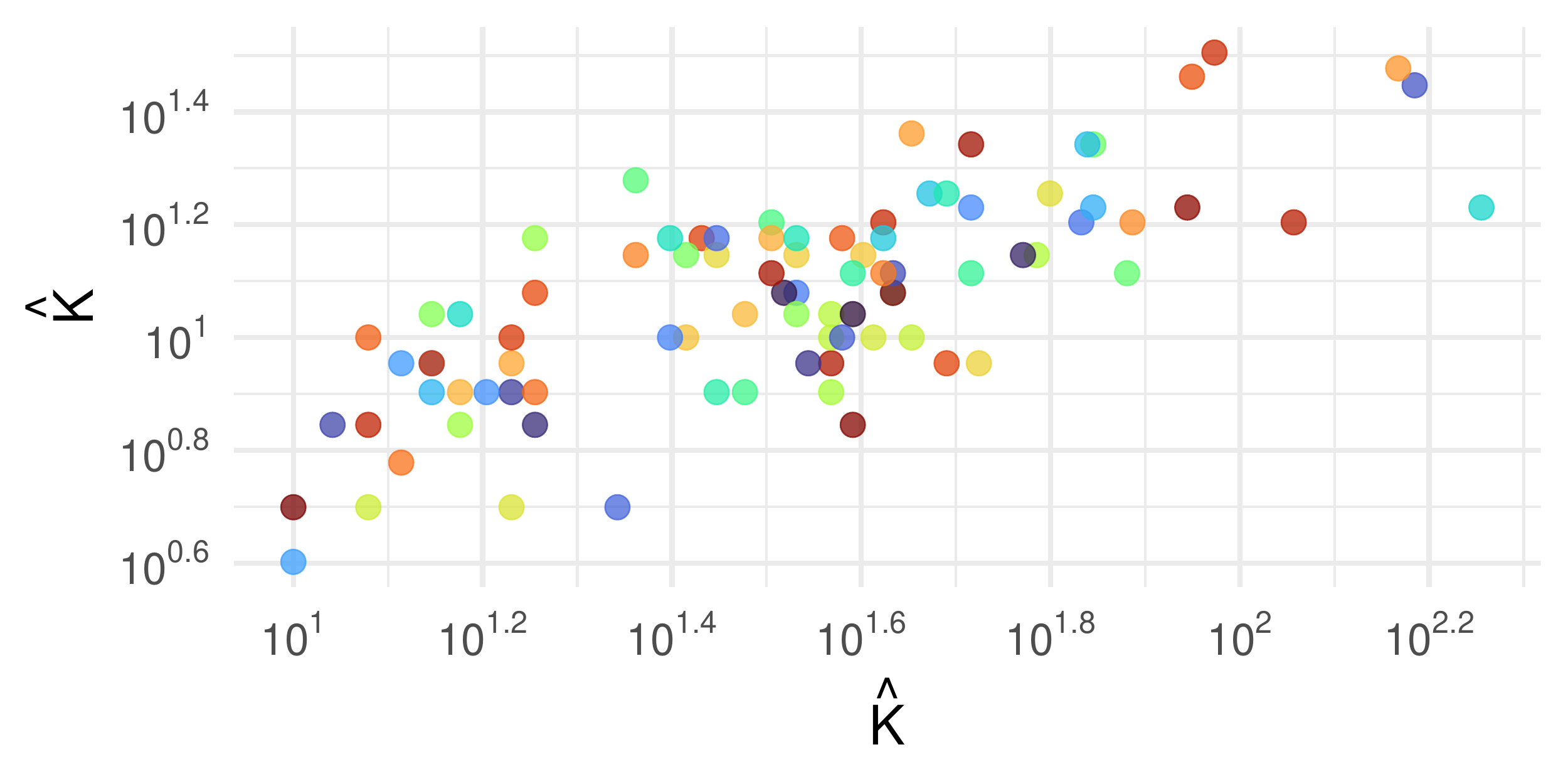}
	\includegraphics[width = 0.49\textwidth]{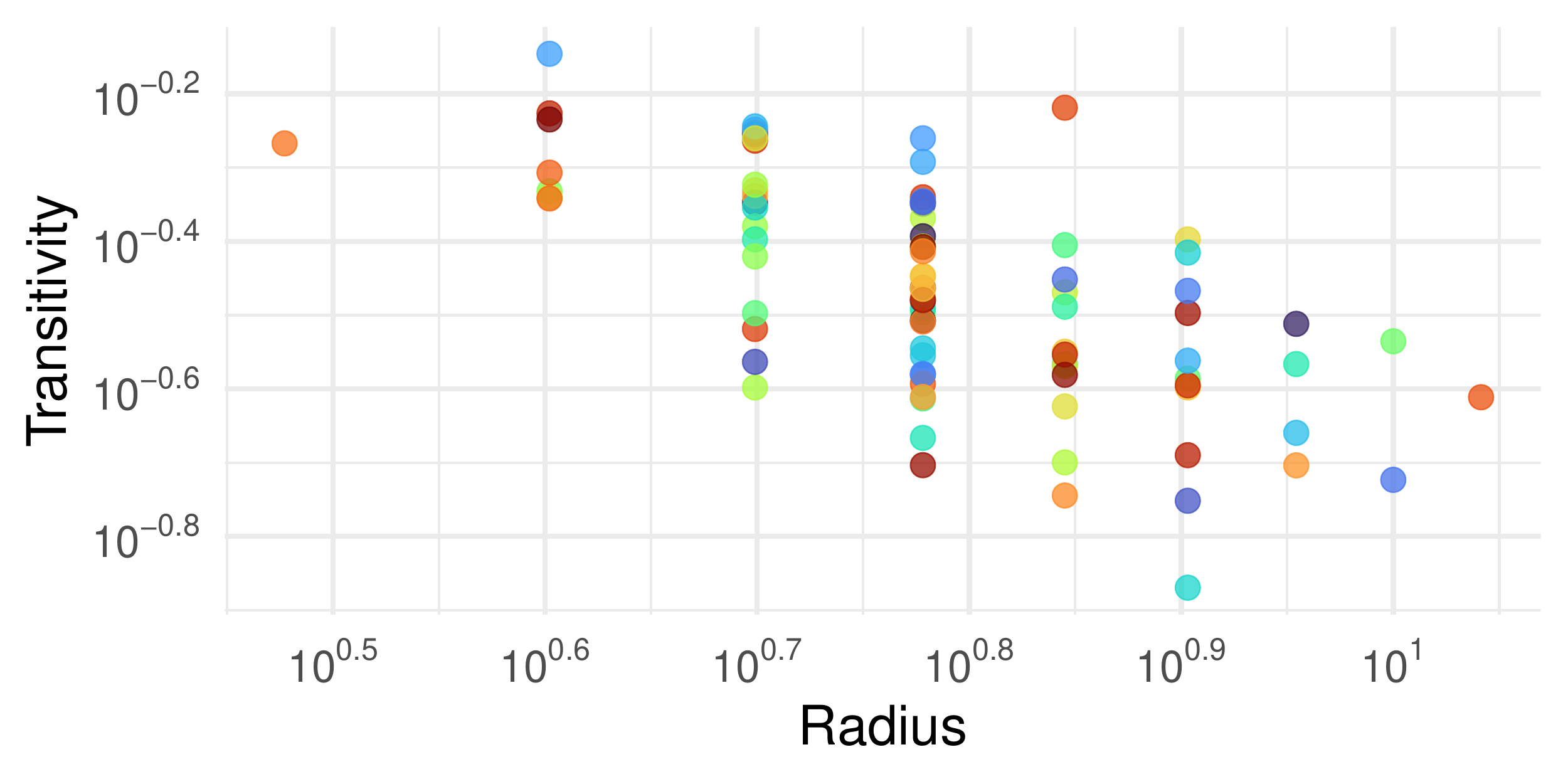}
	\includegraphics[width = 0.49\textwidth]{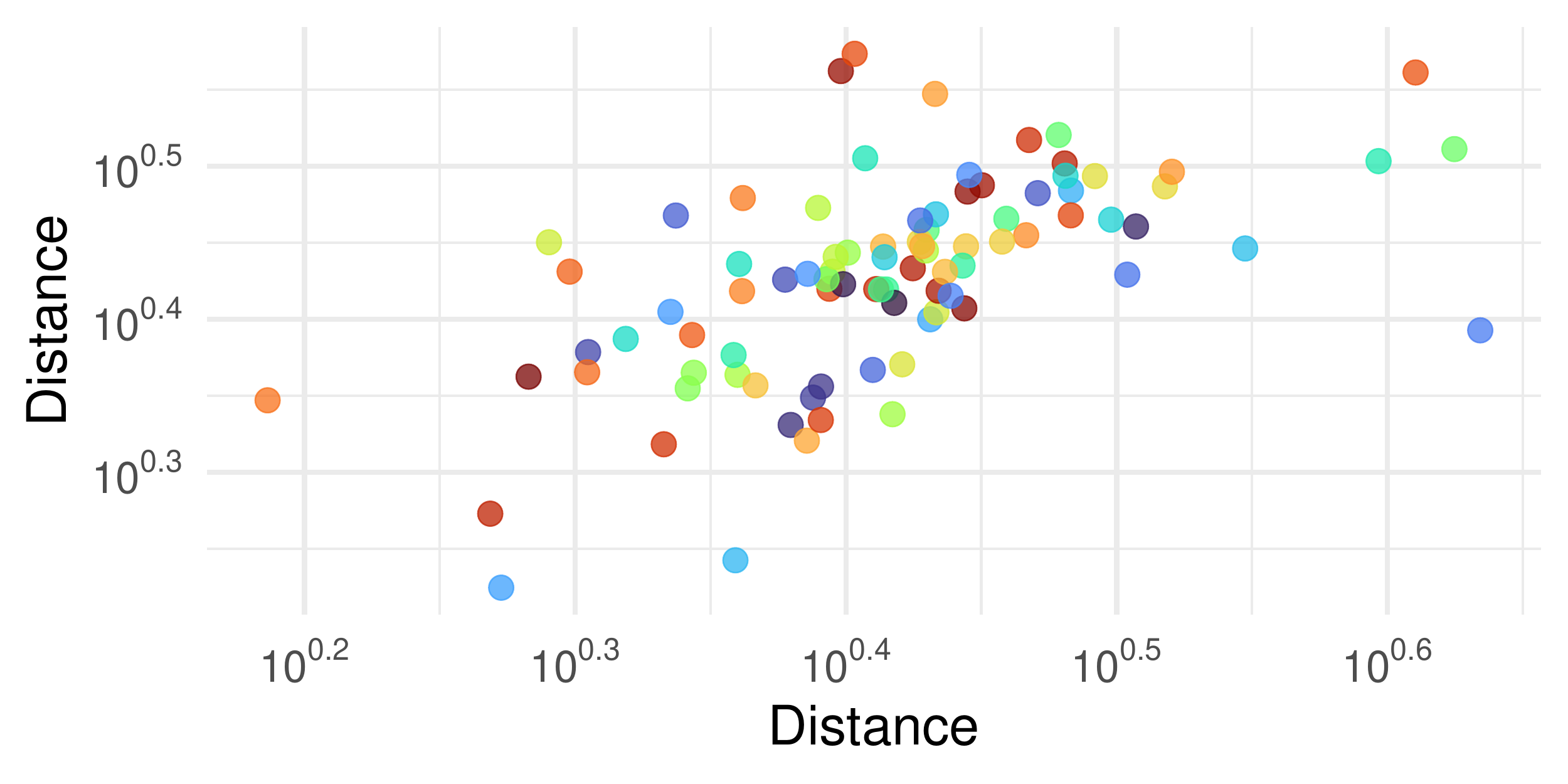}

    \caption{Various plots comparing the organizational tree (x-axis) and email network (y-axis). Points are colored by team as in Figure~\ref{fig:ms_vis}.}
    \label{fig:eda}
\end{figure}

\section{Importance in email network}

\subsection{SRD Hierarchical Position}

We can generalize \textit{hierarchical position} by replacing \textit{hierarchical difference} with SRD:
\begin{equation*}
    \text{SRD HP}(u) = \frac{\sum_{v \neq u} \text{SRD}(u,v)}{n_i - 1} \enskip .
\end{equation*}
This also allows us to define \textit{sent} and \textit{received} position in terms of SRD:
\begin{align*}
    \text{SRD SP}(u) &= \frac{\sum_{v \neq u} 1(A_{uv} > 0) \cdot \text{SRD}(u,v)}{\sum_v 1(A_{uv} > 0)} \\
    \text{SRD RP}(u) &= \frac{\sum_{v \neq u} 1(A_{vu} > 0) \cdot \text{SRD}(u,v)}{\sum_v 1(A_{vu} > 0)} \enskip .
\end{align*}
Analogous to Figure~\ref{fig:eda_position}, we present the difference of these proportions as a function of SRD HP in Figure~\ref{fig:hp_srd}.

\begin{figure} 
    \centering
    \includegraphics[width=.75\textwidth]{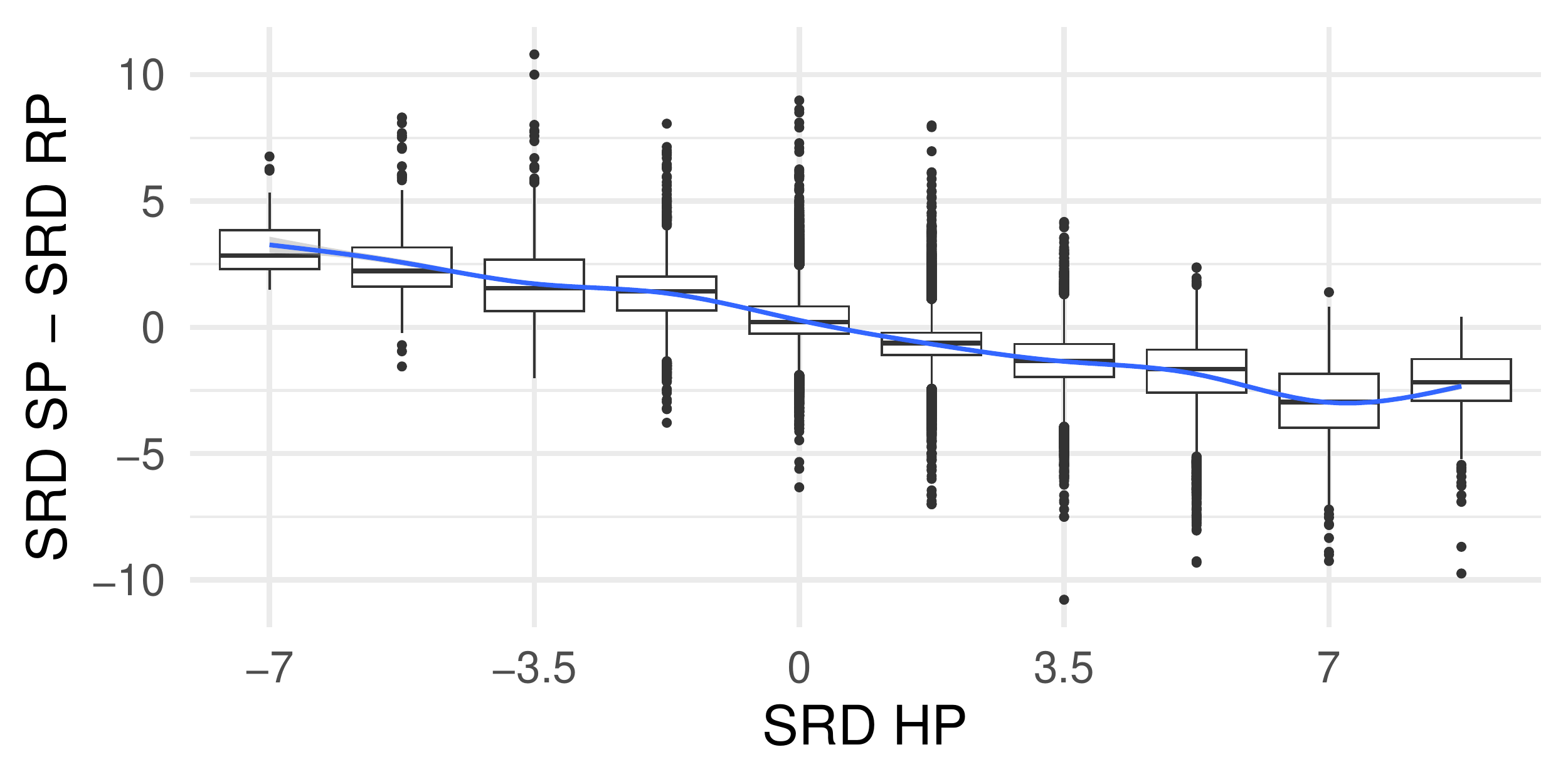}
    \caption{SP and RP by HP with SRD instead of hierarchical difference.. We bin position into 10 equal groupings and we show the box plot within each bin along with a smoothed curve (blue) of the individual (non-binned) data.}
    \label{fig:hp_srd}
\end{figure}

\subsection{Linear model of position on degree/strength}

Figure~\ref{fig:eda_position} shows that both degree and strength are negatively associated with position in the hierarchy.
That is, the lower an employee is in their team's hierarchy, the lower their degree/strength.
To assess whether this relationship is significant, we run four linear models of position against (log) degree and (log) strength.
Table~\ref{tab:importance} provides the coefficient estimates and standard errors, which reveals that all four relationships are significantly different than zero.

\begin{table}[H]
\resizebox{\textwidth}{!}{%
\begin{tabular}{@{}|l|r|r|@{}}
\toprule
\multicolumn{1}{|c|}{\textbf{Model}}          & \multicolumn{1}{c|}{$\widehat \beta_0$ (SE)}    & \multicolumn{1}{c|}{$\widehat \beta_1$ (SE)} \\ \midrule
\rowcolor[HTML]{EFEFEF} 
$\ \ \phantom{\log~}\text{Degree} = \beta_0 + \beta_1 \cdot \text{Level} / \text{Depth} + \varepsilon$ & 505.70 (4.66) & -374.22 (6.69) \\
$\ \ \log~\text{Degree} = \beta_0 + \beta_1 \cdot \text{Level} / \text{Depth} + \varepsilon$ & 1625.32 (14.73)  & -1117.26 (21.26) \\
\rowcolor[HTML]{EFEFEF} 
$\phantom{\log~}\text{Strength} = \beta_0 + \beta_1 \cdot \text{Level} / \text{Depth} + \varepsilon$ & 6.20 (0.02) & -1.90 (0.02) \\
$\log~\text{Strength} = \beta_0 + \beta_1 \cdot \text{Level} / \text{Depth} + \varepsilon$ & 7.35 (0.02) & -1.89 (0.03) \\ \bottomrule
\end{tabular}
}
\caption{Summary of linear models assessing the relationship between position and degree/strength in Figure~\ref{fig:eda_position}.}
\label{tab:importance}
\end{table}

\subsection{Centrality measures}

We want to assess whether importance in the email network corresponds to importance in the organization.
We assess this for each team and importance in the team's organization is defined as relative position.
For importance in the email network, we consider four measures of centrality -- betweenness, closeness, eigenvalue, and authority \citep[see][for definitions]{kolaczyk2014statistical} -- which is shown in Figure~\ref{fig:eda_centrality}.

\begin{figure} 
    \centering
    \includegraphics[width=\textwidth]{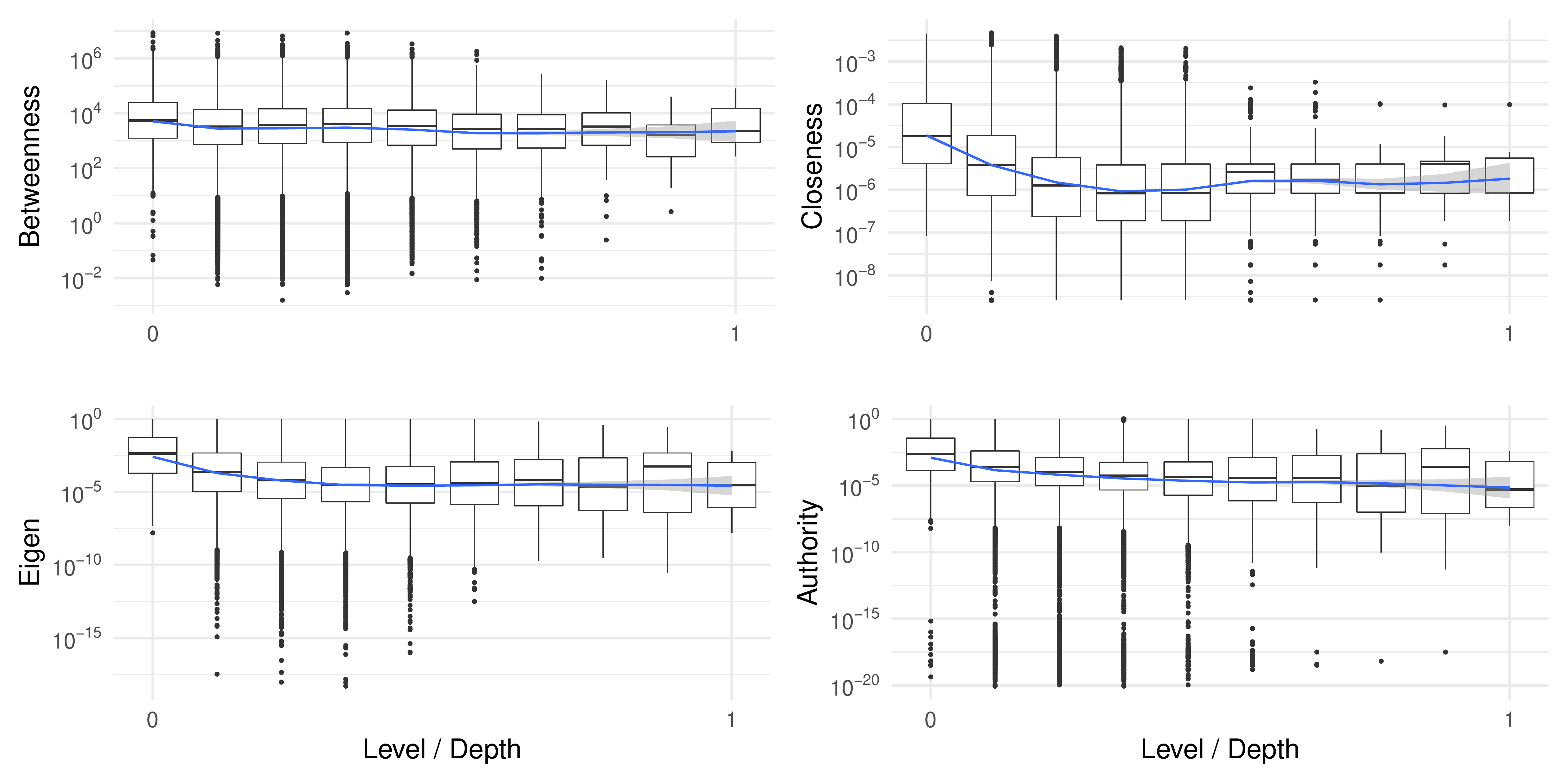}
    
    \caption{Centrality measures by relative position in the hierarchy. In all of the plots, the relative positions have been binned into 10 equal groupings. We show the box plot within each bin along with a smoothed curve (blue) of the raw data.}
    \label{fig:eda_centrality}
\end{figure}

\section{Path analysis}

\subsection{Computing $n^\text{up}$ and $n^\text{down}$}\label{sec:n_up_down}

In a tree $T$, there is a unique shortest path between any two nodes $u, v \in V(T)$.
This shortest path can take steps both up and down the tree, and hence its length is the sum of the number of steps up and down, $n^\text{up}(u, v) + n^\text{down}(u, v)$.
Dijkstra's algorithm is used to compute the shortest path lengths between the nodes in a graph, and can be modified to compute $n^\text{up}(u, v)$ and $n^\text{down}(u, v)$ as follows.

First, assume $T$ is directed down, i.e. the in-degree of the root node is 0 and all other nodes have in-degree equal to 1.
Create the transpose of $T$ that is directed up, say $T^T$, and then create the graph $G = T \cup T^T$.
Note that the adjacency matrix of $G$ is now symmetric.
We equip weights to the edges of $G$ such that,
\begin{equation*}
    G_{uv} = \begin{cases}
        p & \text{if }~ T_{uv} = 1 \\
        q & \text{if }~ T^T_{uv} = 1 \\
        0 & \text{else}
    \end{cases} \enskip ,
\end{equation*}
where $p$ and $q$ are prime and $p >> q$.
Next, run Dijkstra's algorithm on $G$, which will return the shortest (weighted) path lengths between all nodes, say $d(u, v)$.
We can finally recover the desired quantities as
\begin{align*}
    n^\text{up}(u, v) &= \Big(d_{uv} \pmod{p}\Big) / q \\
    n^\text{down}(u, v) &= \Big(d_{uv} - q \cdot n^\text{up}(u, v)\Big) / p \enskip .
\end{align*}
In our calculations, we take $p = 101$ and $q = 3$.

\subsection{Reporting distances}\label{sec:rep_dist}

Both signed reporting distance and directed reported distance can be negative, so neither is a true metric.
Moreover, neither satisfies the identity of indiscernibles, since any two nodes that are on the same level will have SRD and DRD equal to zero.
Likewise, neither satisfies symmetry.
However, they both satisfy a negative symmetry since $\text{SRD}(u,v) = -\text{SRD}(v, u)$ and $\text{DRD}(u,v) = -\text{DRD}(v, u)$.

DRD does not satisfy the triangle inequality.
As one counterexample, suppose $w$ reports directly to $v$, and both $u$ and $v$ are direct reports to some fourth node.
Then $\text{DRD}(u, v) = 0$, but $\text{DRD}(u, w) = -3$ and $\text{DRD}(w, v) = 1$.

Interestingly, SRD satisfies the triangle equality.

\begin{proposition}\label{prop:tri_eq}
    Let $T$ be a tree. For any $u, v, w \in V(T)$,
    \begin{equation*}
        \text{SRD}(u, v) = \text{SRD}(u, w) + \text{SRD}(w, v) \enskip .
    \end{equation*}
\end{proposition}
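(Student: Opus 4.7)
The plan is to show that $\mathrm{SRD}(u,v)$ depends only on the levels (depths from the root) of $u$ and $v$, after which the triangle equality becomes an elementary identity. Fix the root $r$ of $T$ and let $\ell(x)$ denote the level of $x$, i.e.\ the number of edges on the unique path from $r$ to $x$.

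The first step is to describe the shortest path from $u$ to $v$ explicitly. In a rooted tree, let $a = \mathrm{LCA}(u,v)$ denote the least common ancestor of $u$ and $v$. Because $T$ is a tree, the unique $u$-to-$v$ path factors as the concatenation of the upward path from $u$ to $a$ with the downward path from $a$ to $v$. I would justify this briefly by noting that any other walk from $u$ to $v$ that did not visit $a$ would have to leave the ancestor line of $u$ or $v$, creating a cycle, contradicting that $T$ is a tree. Consequently
\begin{equation*}
    n^{\text{up}}(u,v) = \ell(u) - \ell(a), \qquad n^{\text{down}}(u,v) = \ell(v) - \ell(a).
\end{equation*}

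The second step is the key reduction: subtracting these two identities gives
\begin{equation*}
    \mathrm{SRD}(u,v) = n^{\text{up}}(u,v) - n^{\text{down}}(u,v) = \ell(u) - \ell(v),
\end{equation*}
so the dependence on $a$ cancels and $\mathrm{SRD}$ is simply the difference of levels. The third step is then immediate: for any $w \in V(T)$,
\begin{equation*}
    \mathrm{SRD}(u,w) + \mathrm{SRD}(w,v) = \bigl(\ell(u) - \ell(w)\bigr) + \bigl(\ell(w) - \ell(v)\bigr) = \ell(u) - \ell(v) = \mathrm{SRD}(u,v).
\end{equation*}

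The only nontrivial obstacle is the first step, namely verifying rigorously that the shortest $u$-to-$v$ path passes through $\mathrm{LCA}(u,v)$ and has the claimed up/down decomposition. This is a standard structural fact about trees, so the argument is essentially bookkeeping once the LCA is introduced. After that, the proposition is an arithmetic cancellation, which also explains why the analogous equality fails for $\mathrm{DRD}$: signing by $\mathrm{sgn}(\mathrm{SRD})$ rather than by the level difference itself destroys the telescoping structure that makes $\mathrm{SRD}$ additive along triples.
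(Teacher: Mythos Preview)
Your proof is correct and takes a genuinely different route from the paper's. The paper argues by case analysis and induction: first it handles the case where $w$ lies on the unique $u$--$v$ path (so that $\mathrm{RD}$, and hence $n^{\text{up}}$ and $n^{\text{down}}$, decompose additively), and then, for $w$ off the path, it inducts on $\mathrm{RD}(u,w)$, establishing the base case $\mathrm{RD}(u,w)=1$ by a direct edge-count argument and reducing the inductive step to a point $z$ on the $u$--$w$ path. Nowhere does it introduce the level function or the least common ancestor.

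Your approach is shorter and more conceptual: once you identify $\mathrm{SRD}(u,v)=\ell(u)-\ell(v)$, the proposition is literally a telescoping sum, and this identity also explains \emph{why} $\mathrm{SRD}$ is additive (it is the coboundary of the potential $\ell$). The paper's argument, by contrast, never isolates this structural reason and instead verifies the equality path by path. What the paper's approach buys is that it stays entirely within the $n^{\text{up}}/n^{\text{down}}$ vocabulary already set up for the algorithm in the surrounding section, without introducing the auxiliary notion of LCA; what your approach buys is a one-line proof and a clear obstruction for $\mathrm{DRD}$, as you note at the end.
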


\begin{proof}

Let $u, v, w \in V(T)$.
If any two nodes are the same, then the result is trivial since $n^\text{up}(u, u) = n^\text{down}(u, u) = 0$.

Next, let $u, v, w \in V(T)$ all be different.
Suppose $w$ lies on the shortest path from $u$ to $v$.
It follows that $\text{RD}(u, v) = \text{RD}(u, w) + \text{RD}(w, v)$.
Thus
\begin{alignat*}{2}
    &&n^\text{up}(u, v) + n^\text{down}(u, v) &= n^\text{up}(u, w) + n^\text{down}(u, w) + n^\text{up}(w, v) + n^\text{down}(w, v) \\
    &\implies \quad &n^\text{up}(u, v) - n^\text{down}(u, v) &= n^\text{up}(u, w) + n^\text{down}(u, w) + n^\text{up}(w, v) + n^\text{down}(w, v) - 2n^\text{down}(u, v) \\
    &\implies \quad &\text{SRD}(u, v) &= n^\text{up}(u, w) - n^\text{down}(u, w) + n^\text{up}(w, v) - n^\text{down}(w, v) \\
    &&&\ \ \ - 2\Big(n^\text{down}(u, v) - n^\text{down}(u, w) - n^\text{down}(v, w)\Big) \\
    &&&= \text{SRD}(u, w) + \text{SRD}(w, v) \enskip .
\end{alignat*}
The last equality is due to the fact that, since $w$ lies on the shortest path from $u$ to $v$, $n^\text{down}(u, v) = n^\text{down}(u, w) + n^\text{down}(v, w)$.

On the other hand, suppose $w$ does not lie on the shortest path from $u$ to $v$.
We proceed by induction on the length of the shortest path from $u$ to $w$.
If $\text{RD}(u, w) = 1$, then, $u$ and $w$ are connected.
Without loss of generality, say $n^\text{up}(u, w) = 1$ and $n^\text{down}(u, w) = 0$.
Then the shortest path from $w$ to $v$ is just one edge longer than the shortest path from $u$ to $v$.
That is, $\text{RD}(w, v) = 1 + \text{RD}(u, v)$.
Thus
\begin{alignat*}{2}
    &&n^\text{up}(u, v) + n^\text{down}(u, v) &= n^\text{up}(w, v) + n^\text{down}(w, v) - 1 \\
    &\implies \quad &n^\text{up}(u, v) - n^\text{down}(u, v) &= n^\text{up}(w, v) + n^\text{down}(w, v) - 1 - 2n^\text{down}(u, v)\\
    &\implies \quad &\text{SRD}(u, v) &= n^\text{up}(u, w) - n^\text{down}(u, w) + n^\text{up}(w, v) - n^\text{down}(w, v) \\
    &&&\ \ \ - 2 - 2n^\text{down}(u, v) + 2 n^\text{down}(w, v) \\
    &&&= \text{SRD}(u, w) + \text{SRD}(w, v) \enskip .
\end{alignat*}
The penultimate equality uses the fact that $n^\text{up}(u, w) - n^\text{down}(u, w) = 1$
To see the last equality, observe that since $n^\text{up}(u, w) = 1$, we know that $u$ must report directly to $w$.
Furthermore, because $w$ does not lie on the path from $u$ to $v$, the first step on the shortest path from $u$ to $v$ must go down.
Therefore, the shortest path from $w$ to $v$ must go through $u$.
It follows that $n^\text{down}(w, v) = n^\text{down}(u, v) + 1$, hence
\begin{equation*}
    2 - 2n^\text{down}(u, v) + 2 n^\text{down}(w, v) = 2\cdot\big(1 - n^\text{down}(u, v) + n^\text{down}(w, v)\big) = 0 \enskip .
\end{equation*}
With our basis for induction established, assume that $\text{SRD}(u, v) = \text{SRD}(u, w) + \text{SRD}(w, v)$ holds for $\text{RD}(u, w) = k$.
We are done if we can show the result holds when $\text{RD}(u, w) = k + 1$.

If $\text{RD}(u, w) = k + 1$, then there exists a point $z$ on the shortest path from $u$ to $w$.
Since $\text{RD}(u, w) = \text{RD}(u, z) + \text{RD}(z, w)$, we know that $\text{RD}(u, z) \leq k$ and $\text{RD}(z, w) \leq k$.
By assumption, it follows that $\text{SRD}(u, v) = \text{SRD}(u, z) + \text{SRD}(z, v)$ and $\text{SRD}(z, v) = \text{SRD}(z, w) + \text{SRD}(w, v)$.
Putting these together, we have
\begin{align*}
    \text{SRD}(u, v) &= \text{SRD}(u, z) + \text{SRD}(z, w) + \text{SRD}(w, v) \\
    &= n^\text{up}(u, z) - n^\text{down}(u, z) + n^\text{up}(z, w) - n^\text{down}(z, w) + \text{SRD}(w, v) \\
    &= n^\text{up}(u, w) - n^\text{down}(u, w) + \text{SRD}(w, v) \\
    &= \text{SRD}(u, w) + \text{SRD}(w, v) \enskip .
\end{align*}
Note that the penultimate equality follows from the fact that $z$ is on the shortest path from $u$ to $w$.
Our induction is complete and concludes our proof.
\end{proof}

\subsection{Communication frequency}

Figure~\ref{fig:eda_path_unweighted} shows the communication frequency between pairs of employees as a function of the reporting distances.
As in Figure~\ref{fig:eda_path}, on average, the closer two employees are in the organizational tree, the more frequently they communicate.

\begin{figure} 
    \centering
    \includegraphics[width=\textwidth]{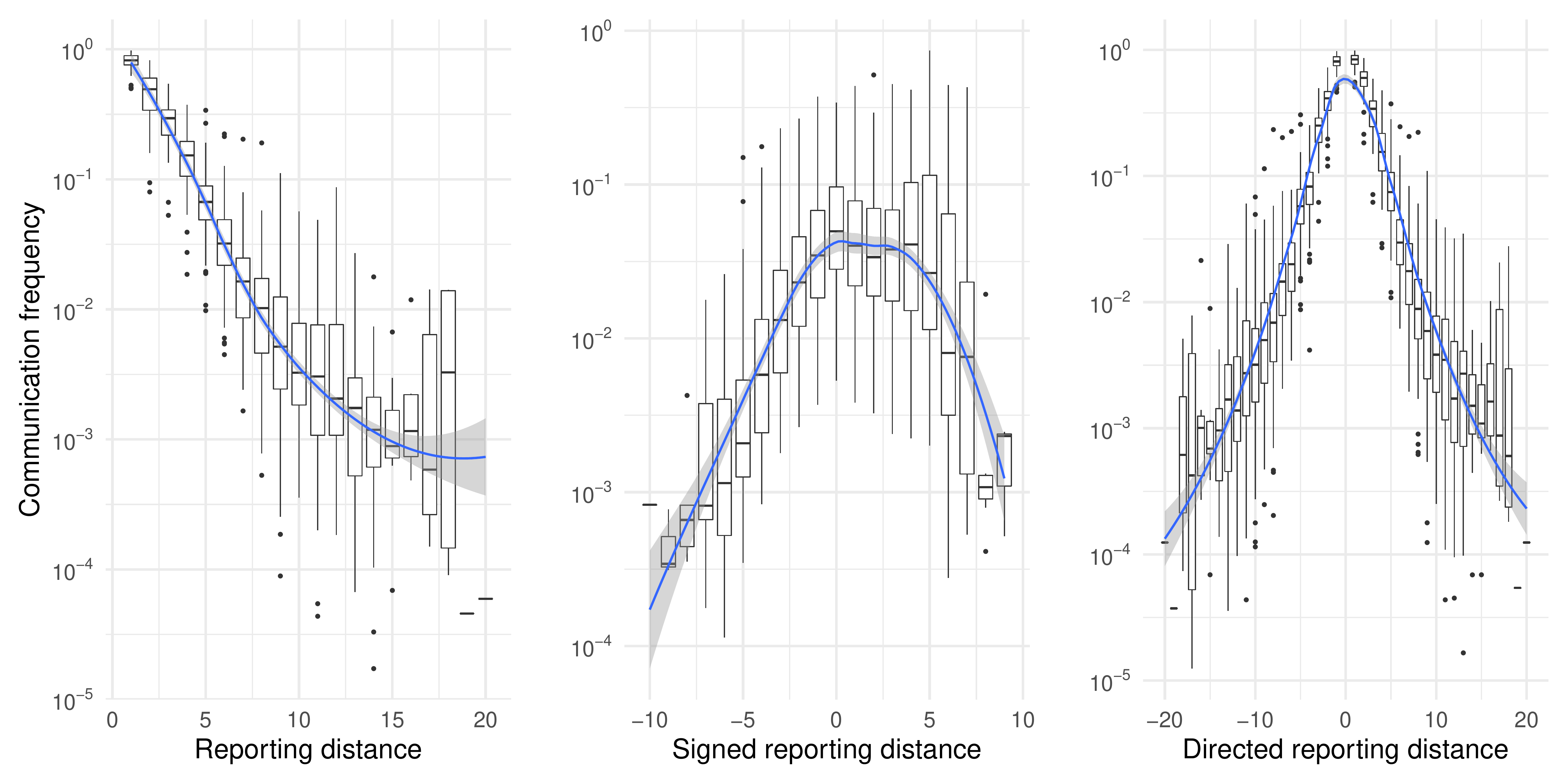}
    \caption{Pairwise reporting distances in the organizational tree and the communication frequency among all pairs in that reporting distance. Reporting distances are computed within each team and the box plots summarize the results across all of the teams.}
    \label{fig:eda_path_unweighted}
\end{figure}

\subsection{Team path analysis}

In Figure~\ref{fig:eda_path}, we summarized the average number of emails exchanged across teams in boxplots.
In Figure~\ref{fig:eda_path_team}, we show the same plot broken out by individual team.

\begin{figure} 
    \centering
    \includegraphics[width=\textwidth]{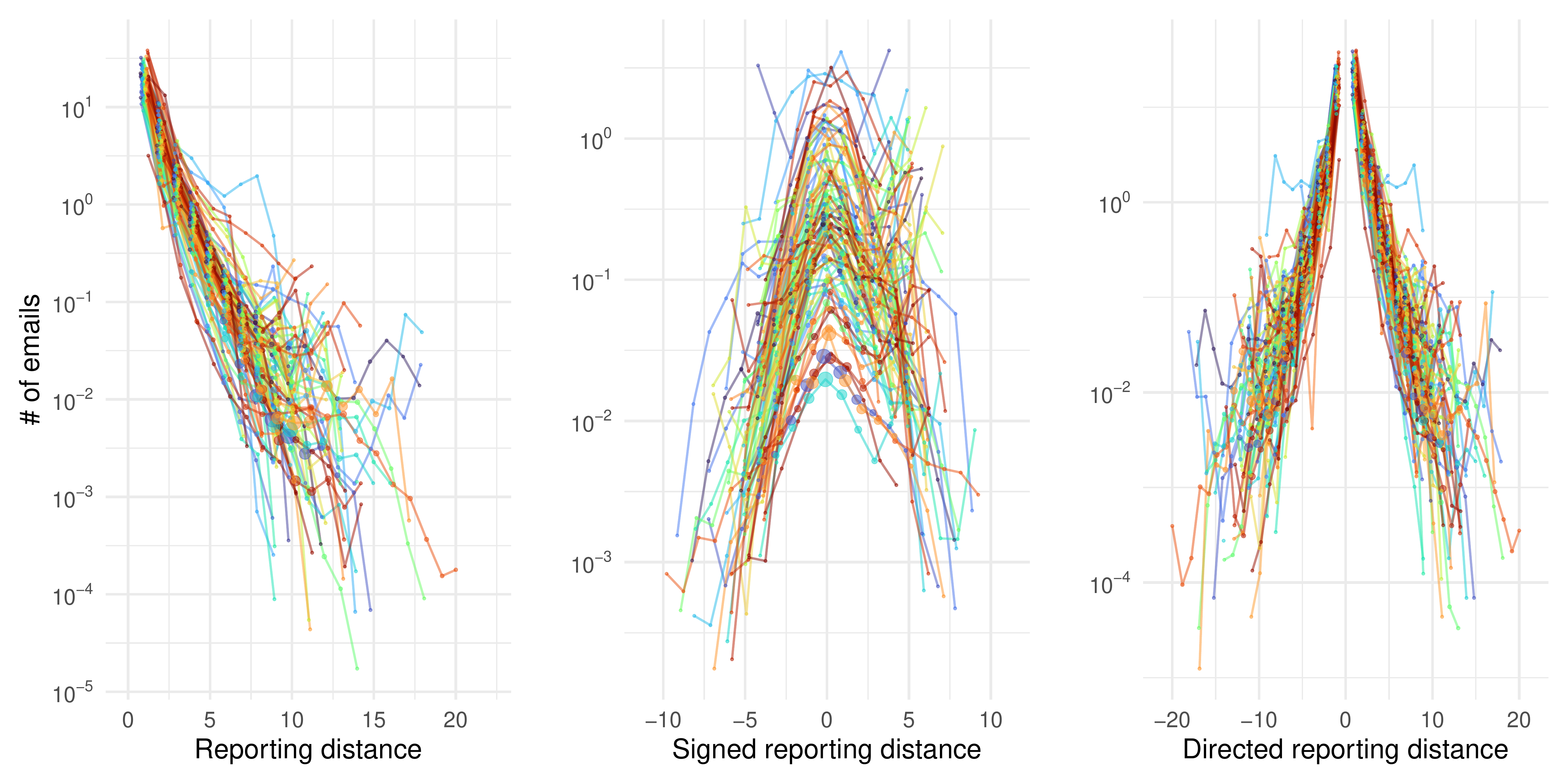}
    \caption{Plots showing pairwise reporting distances in a team's organizational tree and the average number of emails exchanged among all pairs of that reporting distance. Lines are colored by team as in Figure~\ref{fig:ms_vis}.}
    \label{fig:eda_path_team}
\end{figure}

\subsection{Permutation tests}

We performed two permutation tests to see whether the right and center plots of Figure~\ref{fig:eda_path} are symmetric about zero.
Recall that our test statistic is
\begin{equation*}
    t(A) = \sum_{k=1}^{k_{\max}} \Big(\frac{1}{\vert S_k \vert}\sum_{(u, v) \in S_k}A_{uv} - \frac{1}{\vert S_{-k} \vert}\sum_{(u, v) \in S_{-k}}A_{uv}\Big)^2  \enskip ,
\end{equation*}
where $S_k = \{(u,v) : \text{DRD}(u, v) = k\}$ and $k_{\max} = \max_{u,v} \text{DRD}(u, v)$.
In Figure~\ref{fig:eda_path_perm}, we show our null distributions, which we obtained by randomly permuting the number of emails exchanged among pairs whose \textit{reporting distances} are the same.
The null distributions are aggregated over $500$ different permutations within each team, except for the 4 teams that had $> 10,000$ members, whose permutations were too computationally expensive.
The empirical test statistics are given by the red lines.

\begin{figure} 
    \centering
    \includegraphics[width=.49\textwidth]{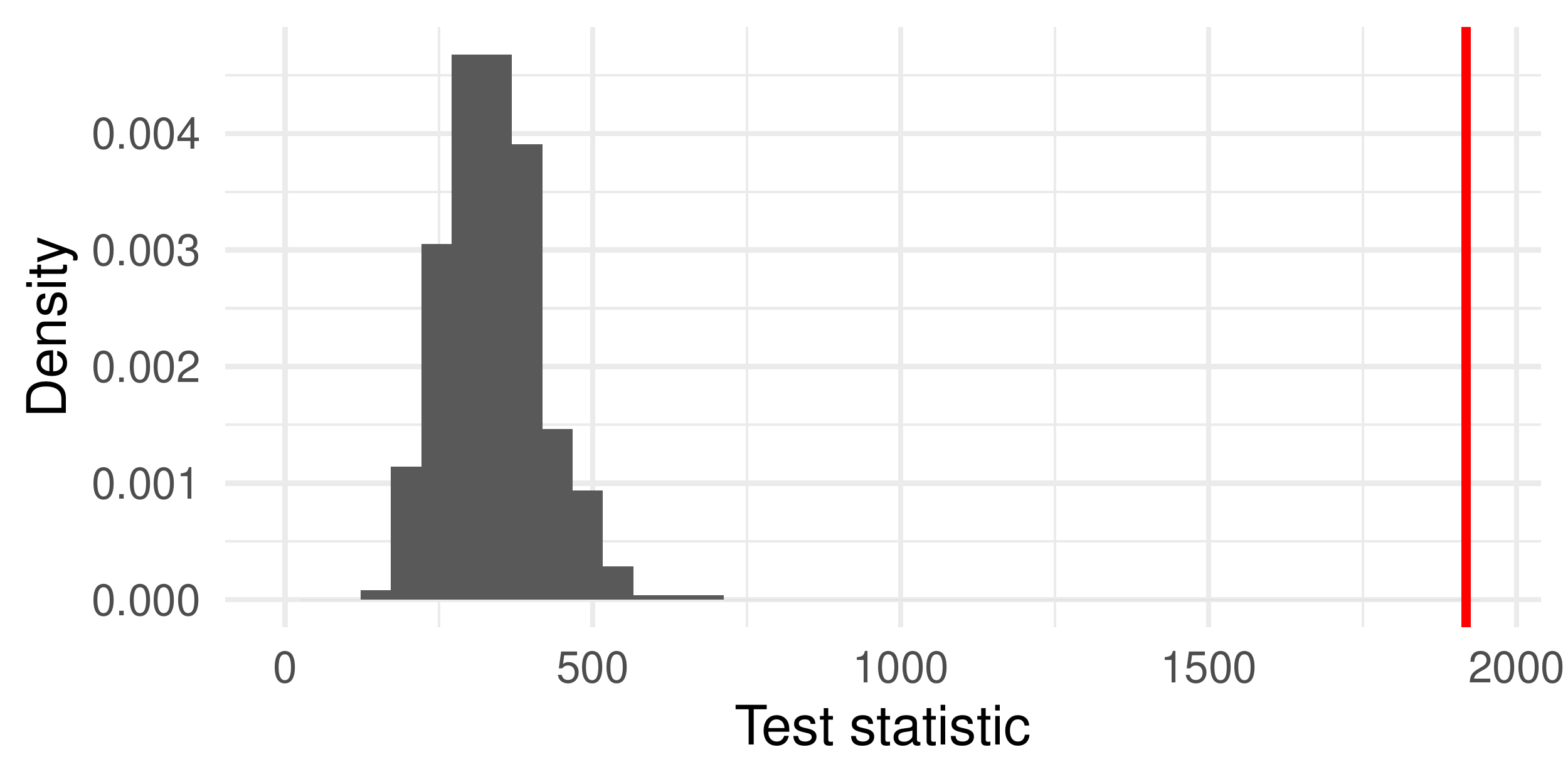}
    \includegraphics[width=.49\textwidth]{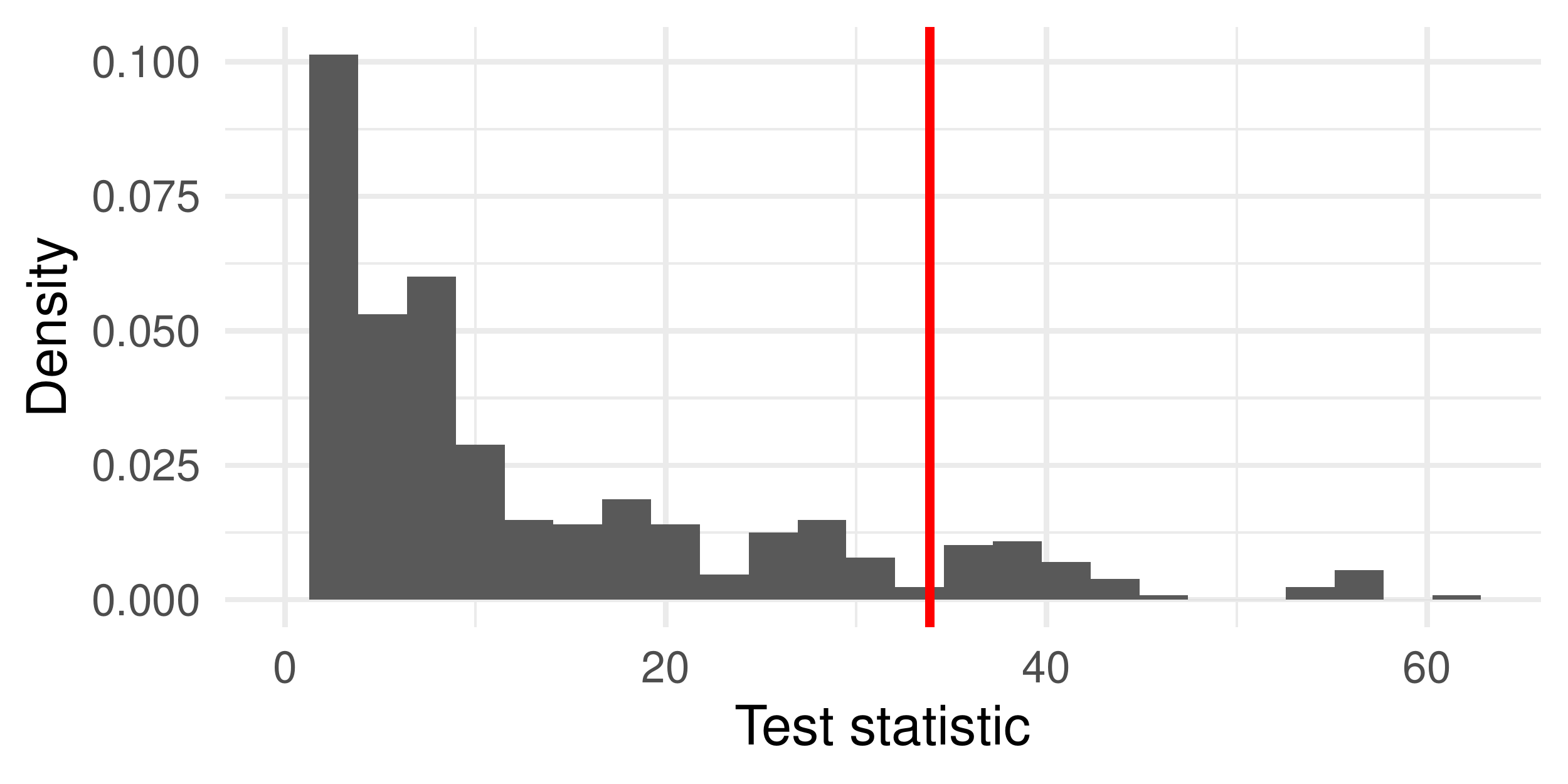}
    \caption{Null distributions from the permutation tests of symmetry described in Section~\ref{sec:path_analysis}. (Left) Directed reporting distance. (Right) Signed reporting distance. In both plots, the red line is the observed test statistic.}
    \label{fig:eda_path_perm}
\end{figure}

\section{Tree distances}\label{sec:tree_dist}

Let $T$ be a directed rooted tree as in the organizational tree of the main text.
Thus $T_{uv} = 1$ if $u$ reports to $v$ and 0 otherwise.
Let $\hat{T}$ be an estimate of $T$.
Finally, let $c_u$ and $\hat{c}_u$ be the betweenness centrality for node $u$ in $T$ and $\hat{T}$, respectively.

We utilize the following two distances to measure how well $\hat{T}$ estimates $T$:
\begin{align}
    d_\text{F}\big(T, \hat{T}\big) &= \frac{1}{n-1} \sqrt{\sum_{u=1}^n \sum_{v=1}^n (T_{uv} - \hat{T}_{uv})^2} \tag{Frobenius} \\
    d_\text{cent}\big(T, \hat{T}\big) &= \frac{1}{n-1} \sqrt{\sum_{u=1}^n (c_u - \hat{c}_v)^2} \tag{\citet[Eq 4.2]{donnat2018tracking}}
\end{align}
The first measures how many edges are in agreement between the estimated tree, $\hat{T}$, and the true tree, $T$.
The second measures the agreement of centrality measures for the nodes in each graph and ``a change in centrality can be understood as a drift of the node away from (or toward) the core of the network" \citep{donnat2018tracking}.

Note that both of these distances can also be generally applied to measure the similarity of two arbitrary graphs.

\section{Tree reconstruction}\label{sec:tree_recon}

The absolute performance of the reconstruction methods in Section~\ref{sec:model} is given in Figure~\ref{fig:tree_abs}.

\begin{figure} 
    \centering
    \includegraphics[width=.75\textwidth]{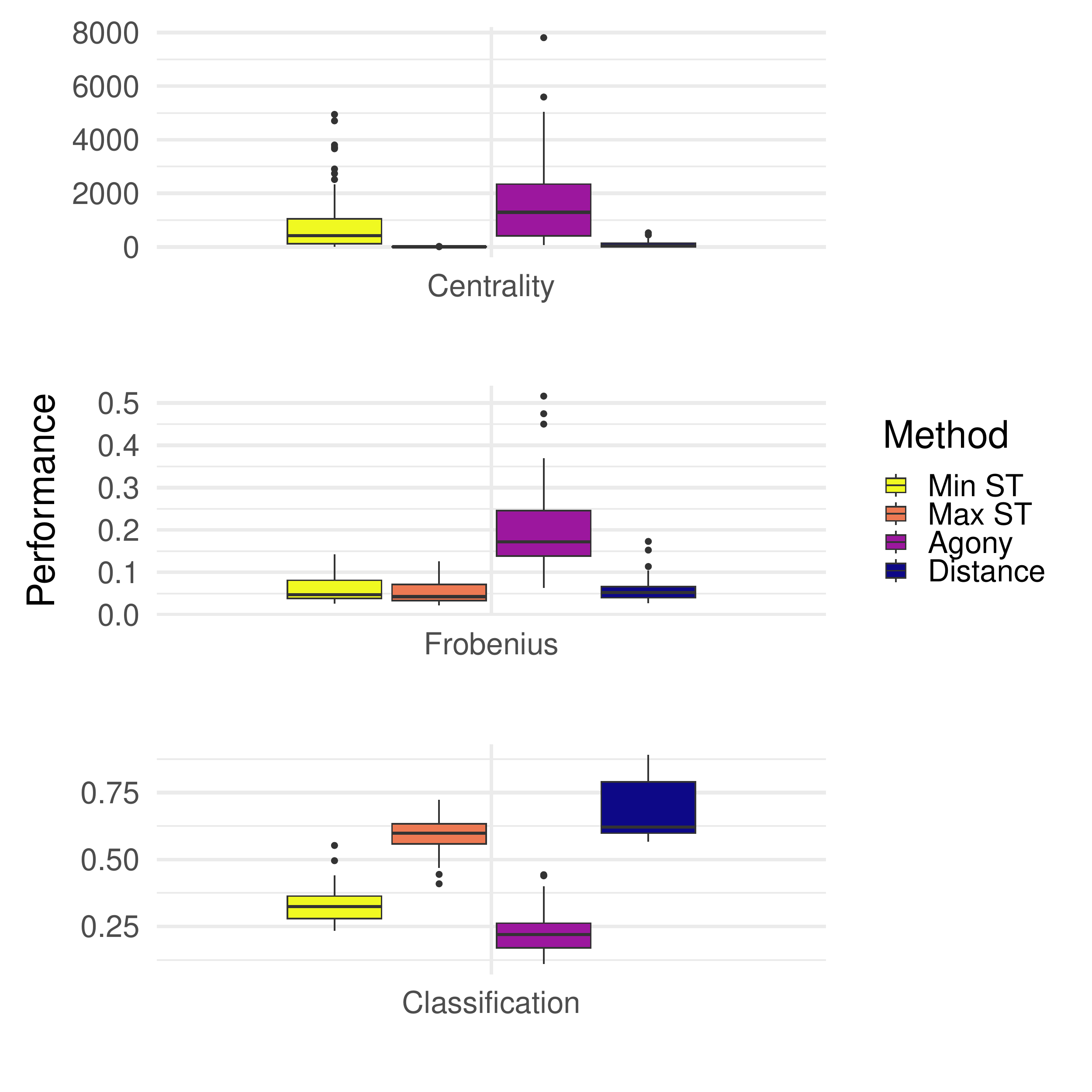}
    \caption{Absolute performance of methods for reconstructing the organizational trees from the communication networks: Min and Max ST \citep{prim1957shortest}, Distance \citep{maiya2009inferring}, and Agony \citep{gupte2011finding}.}
    \label{fig:tree_abs}
\end{figure}

Figure~\ref{fig:tree_levels} shows the mean squared error (MSE) of the predicted level versus the true level as a function of the true level.
That is, we plot level $k$ against
\begin{equation*}
    \frac{1}{\vert u: z(u) \leq k\vert} \sum_{u: z(u) \leq k} \big(z(u) - \hat{z}(u)\big)^2
\end{equation*}
where $z(u)$ is the level of node $u$.
We note that while Min ST and Max ST produce a tree, it is not rooted.
Therefore, we define $\hat{z}(u)$ as the distance in the estimated tree from the true root.
We also do not include Agony and Distance because neither method produces a tree.
We see that Min ST and Max ST are both better at recovering the top of the hierarchy than the lower levels.

\begin{figure} 
    \centering
    \includegraphics[width=.75\textwidth]{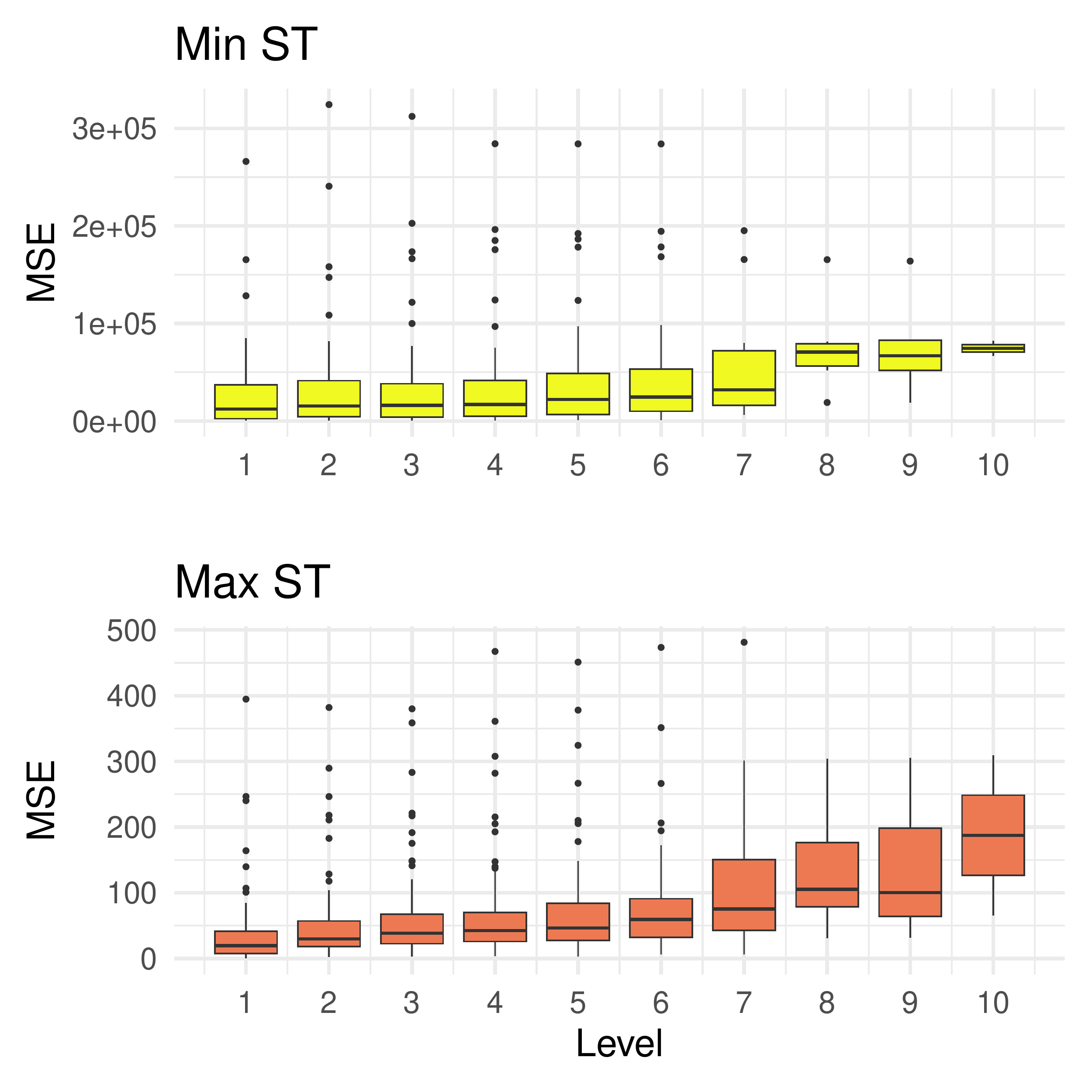}
    \caption{Mean squared error (MSE) of the predicted level versus the true level as a function of level.}
    \label{fig:tree_levels}
\end{figure}


\end{document}